\def\ket#1{|#1\rangle}
\def\bra#1{\langle#1|}
\def\ketbra#1{|#1\rangle\langle#1|}
\def\tr{\mathrm{tr}}
\newcommand{\bn}{{\mathbf n}}
\newcommand{\bOne}{\mbox{\bf 1}}
\newcommand{\bsig}{\mbox{\boldmath{$\sigma$}}}
\newtheorem{theorem}{Theorem}[]
\newtheorem{definition}[]{Definition}
\newtheorem{lemma}[]{Lemma}
\begin{document}
\newcommand{\ri}{{\rm i}}
\newcommand{\re}{{\rm e}}
\newcommand{\bU}{{\bf U}}
\newcommand{\bd}{{\bf d}}
\newcommand{\be}{{\bf e}}
\newcommand{\br}{{\bf r}}
\newcommand{\bk}{{\bf k}}
\newcommand{\bE}{{\bf E}}
\newcommand{\bI}{{\bf I}}
\newcommand{\bR}{{\bf R}}
\newcommand{\bC}{{\bf C}}
\newcommand{\cL}{{\cal L}}
\def\Jp#1{J_+^{(#1)}}
\def\Jm#1{J_-^{(#1)}}
\newcommand{\bZero}{{\bf 0}}
\newcommand{\bM}{{\bf M}}
\renewcommand{\bm}{{\bf m}}
\newcommand{\bs}{{\bf s}}

\sloppy

\title{Quantum-phase synchronization}
\author{Lukas J. Fiderer$^{1}$, Marek Ku\'s$^2$ and Daniel Braun$^{1}$ }
\affiliation{$^1$ Eberhard-Karls-Universit\"at T\"ubingen, Institut
  f\"ur Theoretische Physik, 72076 T\"ubingen, Germany\\
$^{2}$ Center for Theoretical Physics of the Polish
  Academy of Sciences, 02-068 Warsaw, Poland}

\begin{abstract} We study mechanisms that allow one to
  synchronize the quantum phase of two qubits relative to a fixed
  basis. Starting from one qubit in a fixed reference state and
  the other in an unknown state, we find that contrary to the
  impossibility of perfect quantum
  cloning, the quantum-phase can be synchronized perfectly
  through a joined unitary
  operation.  When {\em both} qubits are initially in a pure unknown state,
  perfect quantum-phase synchronization through unitary operations
  becomes impossible.  In this situation we determine the maximum
  average quantum-phase synchronization fidelity, the distribution of
  relative phases and fidelities,
  and  identify optimal quantum
  circuits that achieve this maximum fidelity.  A subset
  of these optimal quantum circuits
  enable perfect quantum-phase synchronization for a class 
  of unknown
initial states restricted to
  the equatorial plane of the Bloch sphere.
\end{abstract}

\maketitle

{\em Introduction.} The quantum mechanical phase marks arguably the
most profound
deviation of quantum mechanics from classical mechanics.  It is at the
base of all quantum mechanical interference as displayed e.g. in the
double slit experiment, and, in the case of
multi-particle systems, enhanced correlations compared to the
classical world, as described by quantum mechanical entanglement.
Some of the most spectacular quantum mechanical
effects occur when phase coherence is
established over a macroscopic number of constitutents, as is the case e.g.~for
superconductivity \cite{buckel2012supraleitung}, superfluidity, Bose-Einstein condensates \cite{pitaevskii2016bose}, quantum
magnets \cite{auerbach2012interacting}, or lasing \cite{henry1982theory}. While the mechanisms that lead to synchronized
quantum phases are well
understood in these examples, one may ask what are the mechanisms in
general that allow one to synchronize the quantum phases of different
systems. Having an answer to that question might enable new types
of macroscopic quantum effects that we are not aware of yet.
In this paper we study quantum-phase synchronization (QPS) for the simplest
possible example, namely two qubits and unitary propagation.

We emphasize that the effects sought here are very different from
those in the field of quantum mechanics of systems that
classically synchronize, also called quantum (stochastic) synchronization
\cite{goychuk_quantum_2006,zhirov_synchronization_2008,shim_synchronized_2007}. In
those systems, one considers the periodic dynamics of
(typically driven) oscillators with slightly different frequencies
which under slight interaction give rise to a common dynamical mode in
which all oscillators synchronize, and research has mainly examined
the question to what extent quantum fluctuations affect that
synchronization when the oscillators become microscopic. QPS, on the
contrary, has no classical analog, as it
concerns the quantum phase which is only defined in the quantum
world. Recently, synchronization of an ensemble of interacting dipoles
modeled as qubits
was studied in \cite{zhu_synchronization_2015}. However, a fixed dipole
interaction was considered, whereas here we are interested in finding
the SU(4) joint-evolution that leads to
QPS.
QPS is related to quantum cloning
\cite{Gisin97,Bruss98,Werner98,fan_quantum_2014},
and it has been proposed \cite{DeMartini98} and experimentally
demonstrated \cite{Nagali07} that
quantum cloning can  amplify entanglement to a macroscopic level.  However, 
there are two crucial differences between QPS and quantum cloning:
{\em i.)} We want to synchronize only the 
quantum phase of the state, not the full state
itself. This implies a different target function (see below). {\em ii.)}
While cloning aims at attaining concurrence
of each output with the input state, QPS solely intends
to achieve concurrence among the outputs but not with an initial
state.\\

An unknown quantum state cannot be cloned perfectly
\cite{dieks_communication_1982,wootters_single_1982}. This remains
true even when restricting the set of input states, e.g.~to states in the
equatorial plane as in phase-covariant cloning (PCC)
\cite{brus_phase-covariant_2000,fan_quantum_2001,dariano_optimal_2003,fiurasek_optical_2003,de_chiara_quantum_2004,sciarrino_realization_2005,buscemi_economical_2005,cernoch_experimental_2006,Chen07,yao_multiple_2014} or its generalizations \cite{DAriano01,karimipour_generation_2002,du_experimental_2005}.
But here we show 
that quantum phases {\em can} be perfectly synchronized in the
standard situation of quantum cloning, where one has one qubit in a
known initial (blank) pure state, and the qubit to be copied in an unknown
state. We then ask for more and consider both initial
states as unknown. We show that perfect QPS is not possible anymore in this
 situation, but find quantum circuits that achieve maximal average
 QPS fidelity. 

{\em Phase synchronization fidelity.} A general pure state of a qubit
(spin-1/2) can be written in a fixed computational basis
$\{|0\rangle,|1\rangle\}$ as
\begin{equation}
  \label{eq:pure}
|\psi\rangle=\cos(\theta/2)|0\rangle+e^{i\varphi}\sin(\theta/2)|1\rangle,
\end{equation}
where $\varphi\in [0,2\pi]$ is the quantum phase of the
state, i.e.~the relative phase between the two basis states,  and
$\theta\in [0,\pi]$ defines the relative weight of the two basis
states. In the
corresponding density matrix,   $\rho=(\bOne+\bn\cdot\bsig)/2$,
$\varphi$ is coded in the azimuthal angle of the Bloch vector
$\bn=(\sin\theta\,\cos\varphi,\sin\theta\,\sin\varphi,\cos\theta)\equiv(n_x,n_y,n_z)$,
and $\bsig=(\sigma_x,\sigma_y,\sigma_z)$ is the vector of Pauli
matrices.
For mixed
states, we still consider $\varphi$ as given by the Bloch vector the
quantum phase of the state, as $\varphi$
still determines the oscillatory behavior
of expectation values,
e.g.~when $\varphi$ evolves linearly with time and one measures
$\langle\sigma_x\rangle$. Only the contrast of the oscillations is
reduced due to the admixture of the identity.
Thus, two states have the same quantum mechanical phase if the $xy$
components of
their Bloch vectors are aligned.  We therefore define quantum-phase fidelity
between two states with $||\bm_i||\ne 0$ as
\begin{equation}
  \label{eq:fidphase}
  f(\rho_1,\rho_2)\equiv
  \bm_1\cdot\bm_2/(||\bm_1||||\bm_2||)=\cos(\Delta \varphi)\,,
\end{equation}
where $||.||$ denotes the standard vector norm, $\bm_i$ the
projection of the Bloch vector of qubit $i$ into the $xy$ plane
($\bm=(n_x,n_y)$), and $\Delta\varphi=\varphi_2-\varphi_1$. If $||\bm_i||=0$,
the phase of qubit $i$, and therefore also the relative phase and
$f(\rho_1,\rho_2)$ are
undefined in this basis.

Consider a general linear quantum channel $\Phi$ on
the 2 qubits, i.e.~a completely positive map  $\mathcal M_4(\mathbb
C)\to\mathcal M_4(\mathbb C)$ that maps density matrices to density
matrices. Starting from an initial product state $\rho=\rho_1\otimes\rho_2$,
we obtain a final state $\rho'=\Phi\rho$ and reduced states
$\rho_1'=\tr_2\rho'$, $\rho_2'=\tr_1\rho'$. We
define the phase
synchronization fidelity  (PSF) as
\begin{align}
F(\rho,\Phi)&\equiv f(\rho_1',\rho_2')\,.
\end{align}
With this
definition, $F(\rho,\Phi)\in[-1,1]$, and $F=1$ ($F=-1$) corresponds to
perfect synchronization (perfect anti-synchronization) of the
quantum-phase for initial state $\rho$.

\begin{definition}\label{def:perfectQPS}
Quantum phase synchronization is said to be {\em perfect} for a
two-qubit quantum
channel $\Phi$ and a set of initial states ${\mathcal A}$, if $\forall
\rho\in
{\mathcal A}$ for which $F(\rho,\Phi)$ is defined,
$F(\rho,\Phi)=1$.
\end{definition}
When allowing arbitrary channels, perfect QPS,
i.e.~$F(\rho,\Phi)=1 \,\forall \, \rho$, can be achieved trivially by resetting
both qubits to the same state.  Therefore, optimizing QPS over
arbitrary quantum channels is not very interesting. Another example,
how non-unitary channels can achieve perfect QPS is the well-known
optimal cloning  machine 
of Bu\v{z}ek and Hillery \cite{buzek_quantum_1996}. It leads to perfect phase
synchronization as both final reduced states are identical. Similarly,
one easily sees that LOCC operations allow synchronized resetting of
the two states.  
To avoid such trivial constructions, we therefore restrict ourselves 
to unitary channels 
$\Phi_U:\rho\mapsto U\rho U^\dagger$, and write $F(\rho,U)\equiv
F(\rho,\Phi_{U})$. Note that unitary operations are also important from a practical perspective, when one tries to keep quantum coherence as long as possible (including, e.g. through error correction). \\

{\em One qubit in a known state.} First consider the standard initial state for
quantum state cloning: $\rho =|\psi_1\rangle\langle\psi_1|\otimes
|0\rangle\langle 0|$, i.e.~the first qubit is in an unknown pure state
$\rho_{1,p}\equiv|\psi_1\rangle\langle\psi_1|$, and the second in a
known (blank) state
$|0\rangle$. No linear transformation exists
that transforms $\rho$ such that at the output both qubits are in
state $|\psi_1\rangle$
\cite{dieks_communication_1982,wootters_single_1982}.  However, one
easily shows that
perfect QPS can be achieved, $F(\rho,U)=1$,
for all $\rho$ of the
above form. We use the following little lemma:
\begin{lemma}\label{lem:1}
Let $V_1,V_2$ be arbitrary single-qubit unitaries
acting on an arbitrary (possibly entangled) two qubit state,
$V_1=R_{\hat{\textbf{n}}}(\alpha)$,
$V_2=R_{\hat{\textbf{m}}}(\beta)$  and
$R_{\hat{\textbf{n}}}(\alpha)\equiv
e^{-i\frac{\alpha}{2}\hat{\textbf{n}}\cdot\boldsymbol{\sigma}}
$. Then for an
arbitrary density matrix $\rho\in\mathcal{D}_4$
(positive-semidefinite Hermitian matrices
with trace one) and a local
transformation $V_1\otimes V_2$,
 the Bloch vectors of the reduced density matrices corresponding to
 the propagated state
\begin{equation}
\rho^\prime=V_1\otimes V_2\rho (V_1\otimes V_2)^\dagger
\end{equation}
are given by the rotated initial Bloch vectors,
\begin{align}
\textbf{n}_1^\prime &= \tilde{R}_{\hat{\textbf{n}}}(\alpha)\,\textbf{n}_1\\
\textbf{n}_2^\prime &= \tilde{R}_{\hat{\textbf{m}}}(\beta)\,\textbf{n}_2\,,
\end{align}
\end{lemma}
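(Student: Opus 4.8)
The plan is to reduce the two-qubit statement to a one-qubit statement by exploiting the tensor-product structure of the local transformation and the linearity of the partial trace. First I would write the reduced density matrix of qubit~1 after the evolution as
\begin{equation}
\rho_1' = \tr_2\!\big[(V_1\otimes V_2)\,\rho\,(V_1\otimes V_2)^\dagger\big]
        = V_1\,\big(\tr_2[(\idmat\otimes V_2)\,\rho\,(\idmat\otimes V_2^\dagger)]\big)\,V_1^\dagger,
\end{equation}
using cyclicity of the trace in the second factor together with $V_2^\dagger V_2 = \idmat$; this shows $\rho_1' = V_1 \rho_1 V_1^\dagger$, and symmetrically $\rho_2' = V_2 \rho_2 V_2^\dagger$. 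The possibly-entangled character of $\rho$ plays no role here, which is exactly the content of the lemma: a local unitary acts on each reduced state independently of correlations.

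The second step is the standard single-qubit fact that conjugation by $R_{\hat{\mathbf n}}(\alpha) = e^{-i\frac{\alpha}{2}\hat{\mathbf n}\cdot\bsig}$ rotates the Bloch vector by the corresponding SO(3) rotation $\tilde R_{\hat{\mathbf n}}(\alpha)$. Writing $\rho_1 = (\idmat + \bn_1\cdot\bsig)/2$, I would compute $V_1\rho_1 V_1^\dagger = (\idmat + \bn_1\cdot(V_1\bsig V_1^\dagger))/2$ and use the identity $R_{\hat{\mathbf n}}(\alpha)\,\sigma_k\,R_{\hat{\mathbf n}}(\alpha)^\dagger = \sum_j [\tilde R_{\hat{\mathbf n}}(\alpha)]_{jk}\,\sigma_j$, which follows from the adjoint action of SU(2) on its Lie algebra (equivalently, from the BCH/Rodrigues expansion $e^{-i\frac{\alpha}{2}\hat{\mathbf n}\cdot\bsig}\,\bsig\,e^{i\frac{\alpha}{2}\hat{\mathbf n}\cdot\bsig}$ giving $\cos\alpha\,\bsig_\perp + \sin\alpha\,(\hat{\mathbf n}\times\bsig) + \hat{\mathbf n}(\hat{\mathbf n}\cdot\bsig)$). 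Matching coefficients of the Pauli matrices then yields $\bn_1' = \tilde R_{\hat{\mathbf n}}(\alpha)\,\bn_1$, and identically $\bn_2' = \tilde R_{\hat{\mathbf m}}(\beta)\,\bn_2$.

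I expect there is essentially no serious obstacle: both ingredients are textbook, and the only thing worth being careful about is the bookkeeping in the first step — making sure that the partial trace over subsystem~2 genuinely commutes past $V_1\otimes\idmat$ (it does, since $V_1$ acts trivially on the traced-out factor) and that cyclicity is applied only within the factor being traced. If I wanted to be maximally economical I could even skip the explicit Pauli-rotation computation and simply cite that $\rho\mapsto V\rho V^\dagger$ on a qubit corresponds to a rotation of the Bloch vector, with $\tilde R$ defined as that rotation; but including the short adjoint-action argument makes the lemma self-contained. The statement for qubit~2 is obtained verbatim by swapping the roles of the two tensor factors.
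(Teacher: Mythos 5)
Your proof is correct, and it reorganizes the same two ingredients the paper uses rather than introducing a genuinely new idea: the paper argues at the level of expectation values, first proving $\tr(\bsig\otimes\idmat\,\rho)=\tr_1\!\left(\bsig\,\tr_2(\rho)\right)$, then writing $\bn_1'$ as a full trace, cancelling $V_2$ against $V_2^\dagger$ through the identity factor via cyclicity, and finally invoking the single-qubit property $\tr\!\left(\bsig\,R\rho R^\dagger\right)=\tilde R\,\bn$ as a known fact; you instead work at the level of states, first establishing $\rho_1'=V_1\rho_1V_1^\dagger$ (the partial trace over qubit 2 removes $\idmat\otimes V_2$ and commutes with $V_1\otimes\idmat$) and then deriving the Bloch rotation from the adjoint action of $\mathcal{SU}(2)$. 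Your version is slightly more self-contained, since the paper cites the rotation property without proof, and your explicit remark that cyclicity is applied only within the traced factor is exactly the bookkeeping point that matters. One small caveat: the parenthetical Rodrigues expansion you quote has the cross-product term with the wrong orientation for the conjugation order you wrote, since $e^{-i\frac{\alpha}{2}\hat{\mathbf{n}}\cdot\bsig}\,\bsig\,e^{+i\frac{\alpha}{2}\hat{\mathbf{n}}\cdot\bsig}=\cos\alpha\,\bsig_{\perp}-\sin\alpha\,(\hat{\mathbf{n}}\times\bsig)+\hat{\mathbf{n}}(\hat{\mathbf{n}}\cdot\bsig)$; the formula you wrote is the one for $e^{+i\frac{\alpha}{2}\hat{\mathbf{n}}\cdot\bsig}\,\bsig\,e^{-i\frac{\alpha}{2}\hat{\mathbf{n}}\cdot\bsig}$. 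This does not affect your conclusion, because the identity you actually use, $V_1\sigma_k V_1^\dagger=\sum_j[\tilde R_{\hat{\mathbf{n}}}(\alpha)]_{jk}\,\sigma_j$ with $\tilde R_{\hat{\mathbf{n}}}(\alpha)$ the right-handed rotation by $\alpha$ about $\hat{\mathbf{n}}$, is stated correctly and yields $\bn_1'=\tilde R_{\hat{\mathbf{n}}}(\alpha)\,\bn_1$ exactly as in the lemma (the paper itself never pins down the handedness convention of $\tilde R$), so it is a sign slip in an aside rather than a gap.
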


\begin{proof}
We first show that the two partial traces that emerge from taking the
partial state and calculating the Bloch vector can be combined to a
trace over the whole system. For an arbitrary $\rho\in\mathcal{D}_4$
we find in the computational basis $\{\ket{0_1},\ket{1_1}\}$ for the
first qubit and $\{\ket{0_2},\ket{1_2}\}$ for the second qubit
\begin{align}
\tr(\boldsymbol{\sigma}\otimes\mathbb{1}\,\rho)&=\sum_{i_1,j_2=0}^1\bra{i_1}\bra{j_2}\boldsymbol{\sigma}\otimes\mathbb{1}\,\rho\ket{i_1}\ket{j_2}\\
&=\sum_{i_1=0}^1\bra{i_1}\sum_{j_2=0}^1\bra{j_2}\boldsymbol{\sigma}\otimes(\sum_{k_2=0}^1\ket{k_2}\bra{k_2})\rho\ket{j_2}\ket{i_1}\label{prop1}\\
&=\sum_{i_1=0}^1\bra{i_1}(\boldsymbol{\sigma}\,\sum_{k_2=0}^1\bra{k_2}\rho\ket{k_2}\ket{i_1}\\
&=\tr_1(\boldsymbol{\sigma}\,\tr_2(\rho)).
\end{align}
This is used in the following, where we prove the lemma for the first
Bloch vector,
\begin{align}
\textbf{n}_1^\prime &=\tr_1(\boldsymbol{\sigma}\,\tr_2(R_{\hat{\textbf{n}}}(\alpha)\otimes R_{\hat{\textbf{m}}}(\beta)\rho(R_{\hat{\textbf{n}}}(\alpha)\otimes R_{\hat{\textbf{m}}}(\beta))^\dagger))\\
&=\tr(\boldsymbol{\sigma}\otimes\mathbb{1}\,R_{\hat{\textbf{n}}}(\alpha)\otimes R_{\hat{\textbf{m}}}(\beta)\rho(R_{\hat{\textbf{n}}}(\alpha)\otimes R_{\hat{\textbf{m}}}(\beta))^\dagger)\label{prop2}\\
&=\tr((R_{\hat{\textbf{n}}}(\alpha)\otimes R_{\hat{\textbf{m}}}(\beta))^\dagger\boldsymbol{\sigma}\otimes\mathbb{1}\,R_{\hat{\textbf{n}}}(\alpha)\otimes R_{\hat{\textbf{m}}}(\beta)\rho)\label{prop4}\\
&=\tr((R_{\hat{\textbf{n}}}^\dagger(\alpha)\boldsymbol{\sigma}R_{\hat{\textbf{n}}}(\alpha))\otimes\mathbb{1}\rho)\\
&=\tr_1(R_{\hat{\textbf{n}}}^\dagger(\alpha)\boldsymbol{\sigma}R_{\hat{\textbf{n}}}(\alpha)\tr_2(\rho))\label{prop3}\\
&=\tr_1(\boldsymbol{\sigma}R_{\hat{\textbf{n}}}(\alpha)\tr_2(\rho)R_{\hat{\textbf{n}}}^\dagger(\alpha))\label{prop5}\\
&=\tilde{R}_{\hat{\textbf{n}}}(\alpha)\textbf{n}_1.\label{prop6}
\end{align}
In lines \ref{prop2} and \ref{prop3}
$\tr_1(\boldsymbol{\sigma}\,\tr_2(\rho))=\tr(\boldsymbol{\sigma}\otimes\mathbb{1}\rho)$
is used, that was shown above. In lines \ref{prop4} and \ref{prop5}
the cyclic property of the trace was used. Line \ref{prop6} uses the
rotation operator's property
\begin{equation}
\textbf{n}^\prime=\tr(\boldsymbol{\sigma}\,R_{\hat{\textbf{n}}}(\alpha)\rho R^\dagger_{\hat{\textbf{n}}}(\alpha))=\tilde{R}_{\hat{\textbf{n}}}(\alpha)\textbf{n},\label{rotoperator}\,.
\end{equation}
The proof for the second Bloch vector
can be done in analogous fashion.
\end{proof}
Actually, this Lemma also holds for $n$ qubits, i.e. for arbitrary $\rho\in\mathcal{D}_{2^n}$, where we find
\begin{equation}
\tr(\mathbb{1}_{2^{l-1}}\otimes\boldsymbol{\sigma}_l\otimes\mathbb{1}_{2^{n-l}}\,\rho)=\tr_l(\boldsymbol{\sigma}_l\tr_{1,...,l-1,l+1,...,n}(\rho)),
\end{equation}
where the trace operator's subscripts indicate the qubits that are
traced out. Then, the proof goes in a manner analogue to the lemma,
and one finds for the $l$th Bloch vector
\begin{equation}
\textbf{n}_l^\prime=\tr_l(\boldsymbol{\sigma}_lR_{\hat{\textbf{n}}}(\alpha)\tr_{1,...,l-1,l+1,...,n}(\rho)R_{\hat{\textbf{n}}}^\dagger(\alpha)),
\end{equation}
which, according to the definition of the rotation operator, implies that we may simply rotate the lth Bloch vector, $\textbf{n}_l^\prime=\tilde{R}_{\hat{\textbf{n}}}(\alpha)\textbf{n}_l$.

Now consider a propagation with a controlled-NOT (CNOT) operation with
qubit 1 as control and qubit 2 as target \cite{nielsen_2000_quantum}.
We denote by $C_{ij}$ a CNOT operation with $i$ as the controlling
and $j$ as the controlled qubit, hence $U=C_{12}$. Then one easily obtains
$\textbf{n}_1^\prime=\textbf{n}_2^\prime=\cos{\theta_1}\hat{\textbf{e}}_z.$
Thus, the initial phase of qubit 1 is erased, but both Bloch vectors are
aligned to the $z-$axis and identical to each other.  Now act with an arbitrary
local transformation that rotates the Bloch vectors away from the $z$-axis,
$
W_{\text{sync}}\equiv R_\textbf{n}(\alpha)\otimes R_\textbf{n}(\alpha).
$
Apart from the
case where after $C_{12}$ both qubits ended up in the maximally mixed
state,
i.e.~for $\theta_1=\frac{\pi}{2}$, the PSF equals
one, $F(\rho_{1,p}\otimes|0\rangle\!\langle
0|,W_{\text{sync}}C_{12})=1$. After the operation
$W_{\text{sync}}C_{12}$, the two qubits 
are perfectly phase synchronized. Thus, inspite of the fact that QPS
has something inherently 
irreversible (different initial phases are mapped to the same final phase),
tracing out a
qubit introduces enough irreversibility to the unitary evolution for
perfect QPS be possible if one qubit is initially known. There is no
contradiction to the result from PCC 
\cite{brus_phase-covariant_2000} that restriction to equatorial input
states for $1\to 2$ cloning yields a 
maximum fidelity of $1/2+\sqrt{1/8}\simeq 0.854$, as the fidelity
maximized there is the overlap between initial and final states, not PSF. 
Our result can be extended to an
initially mixed state of qubit 1. Furthermore, perfect QPS is easily
extended from one to $n-1$ qubits in blank states:

For initial states of the form $\rho=\rho_1\otimes\ket{0}_2\bra{0}_2\otimes\dotsm\otimes\ket{0}_n\bra{0}_n=\rho_1\otimes\ket{0\dotsm0}\bra{0\dotsm0}$ where $\rho_1=(1-p)\mathbb{1}/2+p\ket{\psi}\bra{\psi}$ with $0\leq p\leq1$ and an arbitrary pure state $\ket{\psi}$, see Eq. (1), the transformation $U=W_{\text{sync},n}C_{12}...C_{1n}$ achieves perfect QPS $f(\rho_1',\rho_i')=1\forall i\in\{2,\dotsc,n\}$ as follows from direct calculation: \\

Applying $C\equiv C_{12}...C_{1n}$ to the inital states one finds
\begin{align}
\rho^\prime &= C\rho_1\otimes\ket{0...0}\bra{0...0}C\\
&= \frac{1}{2}(1-p)C(\mathbb{1}_2\otimes\ket{0...0}\bra{0...0})C+\notag\\
&\qquad p\,C(\ket{\psi}\bra{\psi}\otimes\ket{0...0}\bra{0...0})C\\
&=\frac{1-p}{2}(\ket{0...0}\bra{0...0}+\ket{1...1}\bra{1...1})+p\,\ket{\psi_n}\bra{\psi_n},\label{transmix}
\end{align}
where $|\psi_n\rangle=\cos(\theta/2)|0\dotsm0\rangle+e^{i\varphi}\sin(\theta/2)|1\dotsm1\rangle$.
Because Pauli matrices are traceless, calculation of the Bloch vectors
gives zero for the first term in equation (\ref{transmix}). For the second term we obtain identical Bloch
vectors of all qubits in the final state,
      \begin{equation}
      \textbf{n}_i^\prime=p\left(\begin{array}{c}0\\0\\\cos{\theta_1}\\\end{array}\right), i=1,...,n.
      \end{equation}
$W_{\text{sync},n}\equiv R_\textbf{n}(\alpha)\otimes ...\otimes R_\textbf{n}(\alpha)$ rotates all $n$ Bloch vectors away from the z-axis. This holds as Lemma 1 generalizes to $n$ qubits. Apart from the cases of maximally mixed states, $\theta_1=\frac{\pi}{2}$ or $p=0$, phases are defined and perfectly synchronized, $F(\rho_1\otimes\ket{0...0}\bra{0...0},W_{\text{sync},n}C)=1$.\\

{\em Both qubits in unknown states.}  We now
attempt the more ambitious task of phase-synchronizing qubits
that are both initially in unknown pure states,
$\rho=\rho_{1,p}\otimes\rho_{2,p}$ with $\rho_{i,p}=\ketbra{\psi_i}$,
$i=1,2$, and $\psi_i$ of the form (\ref{eq:pure}).  For this we look
at the most general transformations $U$ of two qubits.
The set of all unitaries $U\in\mathcal{SU}(4)$ on two
qubits can be broken down into CNOT operations and single qubit
unitaries \cite{Deutsch95,DiVincenzo95,Barenco95b,Barenco95a,nielsen_2000_quantum}. The
parametrization of $\mathcal{SU}(4)$ requires $15$ real parameters.
Khaneja et al. \cite{khaneja_time_2001} as well as Kraus and Cirac
\cite{kraus_optimal_2001}, found a decomposition of an arbitrary $U$
$\mathcal{SU}(4)$ of the form
\begin{equation}
U=WU_cV\,,
\end{equation}
where $V=V_1\otimes V_2,W=W_1\otimes
W_2\in\mathcal{SU}(2)\,\times\,\mathcal{SU}(2)$ are \textit{local}
unitary transformations exclusively acting on each qubit
separately. $U_c$ is an element of the quotient space
$\mathcal{SU}(4)/\mathcal{SU}(2)\times\mathcal{SU}(2)$.

Minimal circuits for $U_c$ were reported in
\cite{bullock_arbitrary_2003,vidal_universal_2004,vatan_optimal_2004}.
We use
the circuit from Vatan and Williams (theorem 5 in \cite{vatan_optimal_2004})
according to which a general unitary
transformation can be written as
\begin{eqnarray}
U&=& (W_1\otimes W_2)U_c\,(V_1\otimes V_2)\\
U_c&\equiv&\,C_{21}\,(\mathbb{1}_2\otimes R_y(\beta)) \,C_{12}\,
(R_z(\gamma)\otimes R_y(\alpha)) \,C_{21} \label{eq:Uc}
\end{eqnarray}
 This leads to figure \ref{wholecircuit} for a general
 circuit for two qubits.
 \begin{figure}
 \includegraphics[width=0.7\linewidth]{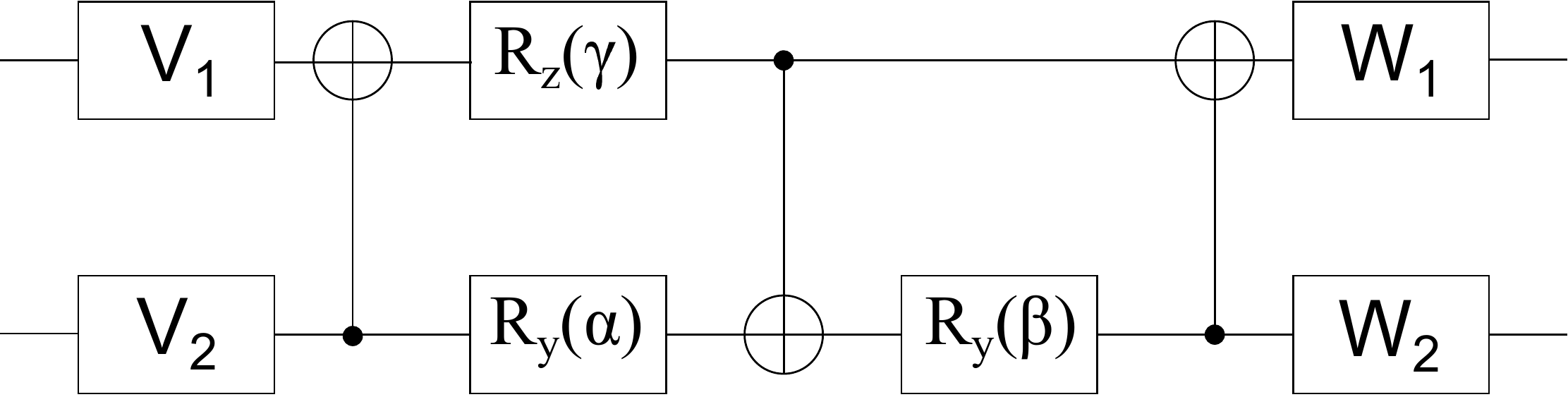}
   \caption{(color online). Quantum circuit for the most general unitary transformation
     $U\in\mathcal{SU}(4)$ on two qubits (see Fig.7 in
     \cite{vatan_optimal_2004}). The circuit $U_c$ is obtained by
     setting $V_1=V_2=W_1=W_2=\mathbb{1}_2$.}
  \label{wholecircuit}
\end{figure}
  We have three angles for $U_c$ and three angles for each
  local unitary, giving a total of 15 parameters.
For all unitaries $U\in \mathcal{SU}(4)$ we have the 
theorem:
\begin{theorem}\label{th:impossible}
Perfect QPS by a unitary transformation is impossible for all initial pure product states of two qubits.
\end{theorem}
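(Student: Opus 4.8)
The plan is to argue by contradiction: assume some $U\in\mathcal{SU}(4)$ gives $F(\rho,U)=1$ for every pure product state $\rho=\rho_{1,p}\otimes\rho_{2,p}$ for which $F$ is defined, and extract enough constraints on $U$ to exclude it. First I would strip off the initial local unitary. Writing $U=(W_1\otimes W_2)\,U_c\,(V_1\otimes V_2)$ with $U_c$ as in (\ref{eq:Uc}), the map $\rho\mapsto(V_1\otimes V_2)\rho(V_1\otimes V_2)^\dagger$ is a bijection of the pure product states onto themselves and $F(\rho,U)=F\big((V_1\otimes V_2)\rho(V_1\otimes V_2)^\dagger,(W_1\otimes W_2)U_c\big)$, so $U$ achieves perfect QPS on all pure product states iff $(W_1\otimes W_2)U_c$ does, and I may take $V_1=V_2=\mathbb{1}_2$. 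The final $W_1\otimes W_2$ cannot be removed this way: by Lemma~\ref{lem:1} it rotates the two output Bloch vectors independently, $\textbf{n}_1'=\tilde{R}_{W_1}\textbf{n}_1''$, $\textbf{n}_2'=\tilde{R}_{W_2}\textbf{n}_2''$, which genuinely changes $\Delta\varphi$, so its six parameters remain in play alongside the three of $U_c$.

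Next I would recast perfect QPS analytically. For a pure product input, $\ket{\Psi'}=U(\ket{\psi_1}\otimes\ket{\psi_2})$ is a pure two-qubit state, and with $\rho_i=(\mathbb{1}_2+\textbf{a}_i\cdot\boldsymbol{\sigma})/2$ the quantities $h_1\equiv\tr[(\rho_1\otimes\rho_2)\,U^\dagger(\sigma_+\otimes\mathbb{1}_2)U]$ and $h_2\equiv\tr[(\rho_1\otimes\rho_2)\,U^\dagger(\mathbb{1}_2\otimes\sigma_+)U]$ encode $\varphi_1'=-\arg h_1$, $\varphi_2'=-\arg h_2$ and $\|\textbf{m}_i'\|=2|h_i|$, and are polynomials of degree at most two in each of $\textbf{a}_1,\textbf{a}_2$. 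Since neither $U^\dagger(\sigma_+\otimes\mathbb{1}_2)U$ nor $U^\dagger(\mathbb{1}_2\otimes\sigma_+)U$ is zero, neither $h_i$ vanishes identically, so the locus where $F$ is defined, $\{h_1\neq0\}\cap\{h_2\neq0\}$, is dense in $S^2\times S^2$; by continuity, $F\equiv1$ there forces the real polynomial $\mathrm{Im}(\bar h_1 h_2)$ to vanish on all unit vectors (with $\mathrm{Re}(\bar h_1 h_2)\geq0$), i.e. to lie in the ideal generated by $\|\textbf{a}_1\|^2-1$ and $\|\textbf{a}_2\|^2-1$. I would then test convenient slices: setting $\textbf{a}_2=\pm\hat{\textbf{e}}_z$ (second qubit initially $\ket{0}$ or $\ket{1}$) makes $h_1,h_2$ complex affine functions of $\textbf{a}_1$ alone, so $\arg h_1\equiv\arg h_2$ on $S^2$ is already a strong constraint relating the computational-basis images $U\ket{00},U\ket{10}$ and $U\ket{01},U\ket{11}$; the mirror slices $\textbf{a}_1=\pm\hat{\textbf{e}}_z$ give two more. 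Fed through Lemma~\ref{lem:1} for $W$ and the explicit three-CNOT form (\ref{eq:Uc}) of $U_c$, these should cut the candidates down to a short list (morally, $U_c$ with vanishing entangling angles together with special $W_1,W_2$), each of which is then killed by one more product input off the slices.

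The main obstacle is this slicing/book-keeping step. Once the CNOTs in $U_c$ entangle the qubits the output reduced states become mixed with input-dependent Bloch-vector lengths, so $\varphi_1'$ and $\varphi_2'$ are genuinely non-affine functions of $(\theta_1,\varphi_1,\theta_2,\varphi_2)$, and propagating the ideal-membership and degree conditions through the full nine-parameter family — while keeping track of exactly which inputs are excluded because some $\|\textbf{m}_i'\|$ vanishes — is where the real effort lies. A clean way to organize it is to work with the orthonormal basis $\{U\ket{ij}\}$: for $\textbf{a}_1=\hat{\textbf{e}}_z$ the output is $\cos(\theta_2/2)\,U\ket{00}+e^{i\varphi_2}\sin(\theta_2/2)\,U\ket{01}$, and demanding that its two reduced $xy$-Bloch projections stay aligned for all $\theta_2,\varphi_2$ is already very restrictive on the pair $(U\ket{00},U\ket{01})$; the three analogous pairs, together with orthonormality of $\{U\ket{ij}\}$, should over-determine $U$ and yield the contradiction.
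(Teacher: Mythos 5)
Your reduction steps are sound and in fact mirror the paper's own setup: absorbing $V_1\otimes V_2$ into the (invariant) set of pure product inputs, keeping the final $W_1\otimes W_2$ as independent Bloch-vector rotations via Lemma~\ref{lem:1}, and handling the states with undefined PSF by noting that the defined locus is dense and that the alignment condition extends by continuity to a global polynomial constraint ($\mathrm{Im}(\bar h_1 h_2)\equiv 0$, $\mathrm{Re}(\bar h_1 h_2)\ge 0$ on $S^2\times S^2$) is a clean way to deal with the measure-zero subtlety that the paper instead treats by explicit case distinctions. (Minor point: with $\sigma_+=(\sigma_x+i\sigma_y)/2$ one has $\varphi_i'=+\arg h_i$ and $\|\mathbf{m}_i'\|=2|h_i|$, but this sign does not affect the structure.) The gap is that the theorem's actual content is never established: everything after the setup is phrased as ``these should cut the candidates down to a short list'' and ``should over-determine $U$ and yield the contradiction.'' The elimination of all candidate tuples $(\alpha,\beta,\gamma,W_1,W_2)$ is precisely the hard part, and it is the entirety of the paper's Appendix~\ref{app_theo1}: twenty-six test states, tables of necessary conditions on the angles, and a case-by-case argument that the intersection of all condition sets is empty, with constant bookkeeping of the possibility that a given test state escapes by having one of its output Bloch projections rotated onto the $z$-axis (undefined PSF) rather than violating $F=1$.

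Moreover, your guess about what survives the slices — ``morally, $U_c$ with vanishing entangling angles together with special $W_1,W_2$'' — is not what happens, and this matters because it underestimates the remaining work. The axis slices $\mathbf{a}_1=\pm\hat{\mathbf{e}}_z$, $\mathbf{a}_2=\pm\hat{\mathbf{e}}_z$ only force the entangling angles of (\ref{eq:Uc}) into the discrete set of multiples of $\pi/2$, which includes maximally entangling, CNOT-like gates; indeed the paper's own example $W_{\rm sync}C_{12}$ synchronizes the entire slice $\mathbf{a}_2=+\hat{\mathbf{e}}_z$ perfectly, and the optimal $U_{\rm max}$ synchronizes the whole equatorial plane while sending the axis states to undefined PSF (both output projections vanish or stay on the $z$-axis), so such candidates are not excluded by your slices or by equatorial inputs. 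The paper kills the surviving discrete families only with additional, specifically chosen off-slice states (its states 23--26, at $\varphi_i=\pi/4$ or $\theta_i=\pi/4$), again tracking which local rotations could render the PSF undefined. You anticipate needing ``one more product input off the slices,'' but until such inputs are exhibited and the finite case analysis is actually carried through, the proposal is a plausible strategy rather than a proof.
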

For the full proof we refer to appendix \ref{app_theo1}. Here we give a short
version that is valid if one neglects the sets of measure zero
for which the
PSF is undefined. The proof proceeds by contradiction. Suppose there
exists a $U\in\mathcal{SU}(4)$ that perfectly quantum-phase
synchronizes
all initial pure product states of two qubits. Consider the two
initial states with Bloch vectors defined by
$\theta_1=0, \theta_2=0$ for the first state and $\tilde{\theta}_1=\pi,
\tilde{\theta}_2=0$ for the second state. The
Bloch vectors after the entangling gate $U_c$
are $\bn'_1=\cos(\alpha+\beta)\,{\bf e}_z,
\bn'_2=\cos(\alpha+\beta)\,{\bf e}_z$ for the first state and
$\tilde{\bn}'_1=\cos(\alpha-\beta)\,{\bf e}_z,
\tilde{\bn}'_2=-\cos(\alpha-\beta)\,{\bf e}_z$ for the second. For a
well defined
PSF, i.e. non-vanishing Bloch vectors, either ${\bf n}'_1$,
$\tilde{\bn}'_1$ or ${\bf n}'_2$, $\tilde{\bn}'_2$ are directed
oppositely, while the respective other Bloch vectors are aligned. This
still holds after consecutive local rotations, thus preventing perfect
QPS for both states. The full proof
is constructed along similar lines, but takes into account also
the states of measure zero for which the PSF
is undefined.\\

Given this no-go theorem, what is the best
possible QPS averaged over all initial states? We introduce the average PSF as
\begin{align}
\braket{F(\rho_{1,p}\otimes\rho_{2,p},U)}&=\int_\Omega
d\Omega\, F(\rho_{1,p}\otimes\rho_{2,p},U)\,\label{mPSF}
\end{align}
with
$d\Omega=(4\pi)^{-2}d\Omega_1\,d\Omega_2\,$,
$d\Omega_i=\sin\theta_i\,d\theta_id\varphi_i$, and $\Omega$ is the full
spatial angle for both Bloch vectors ($0\le \theta_i\le \pi$, $0\le
\varphi_i< 2\pi$). We consider a unitary transformation to be
optimal if it maximizes the average PSF over all $U\in SU(4)$. In
spite of initial angles 
for which the PSF is 
undefined, the integral is well-defined:
\begin{theorem}\label{th:zeroset}
The set of pure initial product
states for which $F(\rho_{1,p}\otimes\rho_{2,p},U)$ is undefined is of
measure zero for all $U$.
\end{theorem}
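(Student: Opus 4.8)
The plan is to combine real-analyticity of the output Bloch vectors with a short linear-algebra non-degeneracy argument. Parametrizing the pure inputs by their Bloch unit vectors $\hat{\bn}_1,\hat{\bn}_2\in S^2$, write $\rho\equiv\rho_{1,p}\otimes\rho_{2,p}=\tfrac14(\mathbb{1}+\hat{\bn}_1\cdot\bsig)\otimes(\mathbb{1}+\hat{\bn}_2\cdot\bsig)$. The final reduced Bloch vectors $\bn_1''=\tr_1(\bsig\,\tr_2(U\rho U^\dagger))$ and $\bn_2''=\tr_2(\bsig\,\tr_1(U\rho U^\dagger))$ have components that are polynomials in the entries of $\hat{\bn}_1,\hat{\bn}_2$, hence real-analytic functions on the connected real-analytic manifold $S^2\times S^2$. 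By Definition~\ref{def:perfectQPS} together with Eq.~(\ref{eq:fidphase}), $F(\rho_{1,p}\otimes\rho_{2,p},U)$ is undefined precisely on $Z=Z_1\cup Z_2$, where $Z_i=\{(\hat{\bn}_1,\hat{\bn}_2):\bm_i''=0\}$, $\bm_i''$ being the projection of $\bn_i''$ onto the $xy$ plane; equivalently $Z_i$ is the zero set of the nonnegative real-analytic function $g_i:=\|\bm_i''\|^2$. The first step is to reduce the theorem to the statement that $g_1\not\equiv0$ and $g_2\not\equiv0$: since $S^2\times S^2$ is connected, a real-analytic function that vanishes on a nonempty open set vanishes identically, so if $g_i\not\equiv0$ then $g_i$ is not identically zero on any nonempty open subset, and the zero set of a not-identically-zero real-analytic function on an open subset of $\mathbb{R}^4$ has Lebesgue measure zero; covering $S^2\times S^2$ by finitely many charts, in each of which $d\Omega$ is absolutely continuous with respect to Lebesgue measure, then gives $d\Omega(Z_i)=0$, whence $d\Omega(Z)\le d\Omega(Z_1)+d\Omega(Z_2)=0$.

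It remains to exclude $g_1\equiv0$ (the case $g_2\equiv0$ being identical), which is the only substantive point. Suppose $g_1\equiv0$. Then for every pure product input $\bm_1''=0$, i.e.\ $\tr_2(U\rho U^\dagger)$ lies in the real span of $\{\mathbb{1},\sigma_z\}$. Since the single-qubit pure states span $\mathcal{M}_2(\mathbb{C})$ — for instance $\ketbra{0}$, $\ketbra{1}$, $\ketbra{+}$, $\ketbra{+i}$ are linearly independent — the product states $\rho_{1,p}\otimes\rho_{2,p}$ span $\mathcal{M}_4(\mathbb{C})$, so by linearity the map $\Phi:\mathcal{M}_4(\mathbb{C})\to\mathcal{M}_2(\mathbb{C})$, $\Phi(X)=\tr_2(UXU^\dagger)$, would have range contained in the two-dimensional space $\mathrm{span}_{\mathbb{C}}\{\mathbb{1},\sigma_z\}$. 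But $\Phi$ is the composition of the linear bijection $X\mapsto UXU^\dagger$ of $\mathcal{M}_4(\mathbb{C})$ with the partial trace $\tr_2$, and $\tr_2$ is surjective onto the four-dimensional space $\mathcal{M}_2(\mathbb{C})$ (e.g.\ $\tr_2(\tfrac12 A\otimes\mathbb{1})=A$), so $\Phi$ is surjective — a contradiction. Hence $g_1\not\equiv0$, and likewise $g_2\not\equiv0$.

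The main obstacle is precisely this non-degeneracy step: it must be established uniformly in $U\in\mathcal{SU}(4)$, ruling out in particular any ``adversarial'' entangling gate engineered to pin a reduced output Bloch vector to the $z$-axis for all pure product inputs. The surjectivity argument above settles this cleanly and in fact works for any invertible $U$. The remaining ingredients are routine: the polynomial (hence real-analytic) dependence of $\bn_1'',\bn_2''$ on $\hat{\bn}_1,\hat{\bn}_2$, and the classical dichotomy that a real-analytic function on a connected manifold is either identically zero or has a Lebesgue-null zero set. One could equivalently first use the decomposition $U=(W_1\otimes W_2)U_c(V_1\otimes V_2)$ together with Lemma~\ref{lem:1} to rewrite $Z_i$ as the set of inputs whose reduced Bloch vector after $U_c(V_1\otimes V_2)$ is parallel to a fixed axis determined by $W_i$, but this reformulation does not shorten the proof.
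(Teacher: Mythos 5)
Your proposal is correct, and it reaches the conclusion by a genuinely different route than the paper. The paper works in the explicit angle parametrization: it writes one component of each reduced output Bloch vector as a sum $\sum_j c_j(\sigma)\,t_j(\theta_i,\varphi_i)$ over twelve trigonometric monomials in the initial angles, shows by a case analysis on the local-rotation angles that the coefficients $c_j$ cannot all vanish, and then uses repeated differentiation (fourth derivatives reproduce $\sin$ and $\cos$) to separate the monomials and conclude that the vanishing condition imposes nontrivial relations on the initial angles, hence confines them to a set of measure zero (measure zero being understood there as ``fewer than four free parameters''). You instead observe that the output Bloch components are polynomial, hence real-analytic, in $(\hat{\bn}_1,\hat{\bn}_2)\in S^2\times S^2$, so each $Z_i$ is the zero set of a real-analytic $g_i=\|\bm_i''\|^2$, and the only substantive task is to rule out $g_i\equiv 0$; your surjectivity argument (pure product states span $\mathcal{M}_4(\mathbb{C})$, while $X\mapsto\tr_2(UXU^\dagger)$ is onto the four-dimensional $\mathcal{M}_2(\mathbb{C})$ and so cannot have range inside $\mathrm{span}_{\mathbb{C}}\{\mathbb{1},\sigma_z\}$) settles this uniformly for every invertible $U$, with no case analysis and no reliance on the Kraus--Cirac/Vatan--Williams decomposition. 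What each approach buys: the paper's proof is elementary and self-contained, staying within the explicit circuit parametrization already used for Theorem~1, at the cost of a lengthy coefficient analysis and an informal notion of measure zero; yours is shorter, uses the standard Lebesgue notion, and generalizes immediately to more qubits or higher dimensions, at the cost of invoking the (standard but nontrivial, citation-worthy) fact that a not-identically-zero real-analytic function on a connected manifold has a null zero set. Your handling of the identity-theorem subtlety (ruling out vanishing on a chart via connectedness) and the reduction $Z=Z_1\cup Z_2$ are both correct, so I see no gap.
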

The proof is given in appendix \ref{app_theo2}.
It is based on showing that for
all $U$, undefined PSF leads to
relations between
the initial angles that have to be satisfied, reducing thus the
number of free parameters.  This leads to a set of initial states of
measure zero. perfect QPS implies
$\langle F(\rho_{1,p}\otimes\rho_{2,p},U)\rangle=1$, whereas the
converse is not true,
as there may be other states of measure zero where the PSF is defined
but different from one.
The average phase fidelity of two initial states vanishes when we take
them evenly distributed over the two Bloch spheres,
i.e.~$\braket{F(\rho_{1,p}\otimes\rho_{2,p},\mathbb{1}_4)}=0$.\\

We first analyze the performance of the quantum
circuit $U_c$, parametrized by three angles $\alpha,
\beta$ and $\gamma$:
\begin{theorem}\label{th:Uc}
The unitaries $U_c$ given by
eq.(\ref{eq:Uc}) leave the mean phase fidelity of pure initial
product states invariant,
$\braket{F(\rho_{1,p}\otimes\rho_{2,p},U_c)}=0$.
\end{theorem}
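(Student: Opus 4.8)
The plan is to exhibit a measure-preserving involution $\tau$ on the set of pure product states that reverses the sign of the PSF, $F(\tau(\rho),U_c)=-F(\rho,U_c)$ on the full-measure set where $F$ is defined; the average \eqref{mPSF} then equals its own negative and hence vanishes. By Theorem~\ref{th:zeroset} the exceptional states where the PSF is undefined form a null set, and since $\tau$ will not alter $\|\bm_1'\|$ or $\|\bm_2'\|$ it leaves that set invariant, so it plays no role.

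The core of the argument, and the step I expect to be the main obstacle, is the operator identity
\begin{equation}
U_c^{\dagger}\,(\sigma_x\otimes\sigma_y)\,U_c=\sigma_y\otimes\sigma_x ,
\label{eq:pauliint}
\end{equation}
i.e.\ $U_c$ intertwines these two ``mixed'' two-qubit Pauli operators, both of which are \emph{local}. I would prove it most cleanly by first recording $U_c^{\dagger}(\sigma_x\otimes\mathbb{1}_2)U_c=(\mathbb{1}_2\otimes\sigma_x)\,e^{-iH}$ and $U_c^{\dagger}(\mathbb{1}_2\otimes\sigma_y)U_c=(\sigma_y\otimes\mathbb{1}_2)\,e^{-iH}$ with $H=\alpha\,\sigma_x\otimes\sigma_y+\gamma\,\sigma_z\otimes\sigma_z$ (the angle $\beta$ drops out, and the two terms of $H$ commute); multiplying the two relations and using that $\sigma_y\otimes\mathbb{1}_2$ anticommutes with $H$, the two exponentials cancel and one is left with $(\mathbb{1}_2\otimes\sigma_x)(\sigma_y\otimes\mathbb{1}_2)=\sigma_y\otimes\sigma_x$. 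The two factorizations themselves follow by Heisenberg-propagating $\sigma_x\otimes\mathbb{1}_2$ and $\mathbb{1}_2\otimes\sigma_y$ through the five gates of $U_c$ in \eqref{eq:Uc} one at a time, using that CNOT maps Paulis to Paulis and that $R_y$ is real. Note that \eqref{eq:pauliint} is special to the canonical representative $U_c$: conjugating instead by the full $U=WU_cV$ would bring in the local factors $V,W$ and spoil it, which is why those factors cannot be dropped in the subsequent optimization.

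Given \eqref{eq:pauliint}, equivalently $U_c(\sigma_y\otimes\sigma_x)U_c^{\dagger}=\sigma_x\otimes\sigma_y$, I take $\tau:\rho_1\otimes\rho_2\mapsto(\sigma_y\rho_1\sigma_y)\otimes(\sigma_x\rho_2\sigma_x)$, which sends pure product states to pure product states and acts on the two Bloch spheres by the rotations $(n_x,n_y,n_z)\mapsto(-n_x,n_y,-n_z)$ and $(n_x,n_y,n_z)\mapsto(n_x,-n_y,-n_z)$; being isometries of the spheres these preserve $d\Omega$. For the propagated state, $U_c\,\tau(\rho)\,U_c^{\dagger}=(\sigma_x\otimes\sigma_y)\,(U_c\rho U_c^{\dagger})\,(\sigma_x\otimes\sigma_y)$, and since conjugating the traced-out qubit by a unitary does not change a partial trace, the reduced states are $\sigma_x\rho_1'\sigma_x$ and $\sigma_y\rho_2'\sigma_y$. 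By Lemma~\ref{lem:1} their Bloch vectors are $(n_{1x}',-n_{1y}',-n_{1z}')$ and $(-n_{2x}',n_{2y}',-n_{2z}')$, so the in-plane projections become $\bm_1'\mapsto(n_{1x}',-n_{1y}')$ and $\bm_2'\mapsto(-n_{2x}',n_{2y}')$. Therefore $\bm_1'\cdot\bm_2'\mapsto-(n_{1x}'n_{2x}'+n_{1y}'n_{2y}')$ while $\|\bm_1'\|$ and $\|\bm_2'\|$ are unchanged, i.e.\ $F(\tau(\rho),U_c)=-F(\rho,U_c)$. Replacing $\rho$ by $\tau(\rho)$ in \eqref{mPSF} now gives $\langle F(\rho_{1,p}\otimes\rho_{2,p},U_c)\rangle=-\langle F(\rho_{1,p}\otimes\rho_{2,p},U_c)\rangle$, hence $0$.

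Everything after \eqref{eq:pauliint} is bookkeeping with Lemma~\ref{lem:1}. If one prefers to avoid operator manipulations, an equivalent route is to work out the four components $n_{1x}',n_{1y}',n_{2x}',n_{2y}'$ explicitly as bilinear forms in the input Bloch components $(a_i,b_i,c_i)\equiv(n_{ix},n_{iy},n_{iz})$ from the circuit \eqref{eq:Uc}, and then to check by inspection that under $a_1\to-a_1,\ c_1\to-c_1,\ b_2\to-b_2,\ c_2\to-c_2$ (the action of $\tau$ on Bloch components) one gets $n_{1x}'\to n_{1x}'$, $n_{1y}'\to-n_{1y}'$, $n_{2x}'\to-n_{2x}'$, $n_{2y}'\to n_{2y}'$; the sign reversal of $F$ is then manifest.
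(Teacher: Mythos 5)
Your proposal is correct, and at bottom it exploits the same symmetry as the paper's proof: your involution $\tau:\rho_1\otimes\rho_2\mapsto(\sigma_y\rho_1\sigma_y)\otimes(\sigma_x\rho_2\sigma_x)$ acts on the initial angles exactly as the paper's combined substitutions $\varphi_1\mapsto\pi-\varphi_1$, $\varphi_2\mapsto-\varphi_2$, $\theta_i\mapsto\pi-\theta_i$ (the shift of $\varphi_1$ by $\pi$ followed by the reflection used in the paper's change of variables), and both arguments conclude by noting that a measure-preserving map flipping the sign of $F$ forces the average to vanish. What differs is how the sign flip is established. The paper performs the substitution inside the integral and asserts $F\mapsto-F$ by an unexhibited ``direct calculation'' on $F(\varphi_1,\varphi_2,\theta_1,\theta_2)$, which implicitly relies on the explicit post-$U_c$ Bloch vectors listed in its appendix; you instead derive it from the Heisenberg-picture identity $U_c(\sigma_y\otimes\sigma_x)U_c^\dagger=\sigma_x\otimes\sigma_y$. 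I checked that identity by Pauli propagation through the circuit of eq.~(\ref{eq:Uc}): the first $C_{21}$ sends $\sigma_y\otimes\sigma_x$ to $\sigma_z\otimes\sigma_y$, which commutes with $R_z(\gamma)\otimes R_y(\alpha)$; $C_{12}$ then gives $\mathbb{1}_2\otimes\sigma_y$, which commutes with $\mathbb{1}_2\otimes R_y(\beta)$; and the final $C_{21}$ yields $\sigma_x\otimes\sigma_y$ --- so the identity holds, $\beta$ indeed drops out, and your intermediate factorizations with $H=\alpha\,\sigma_x\otimes\sigma_y+\gamma\,\sigma_z\otimes\sigma_z$ are also correct. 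The resulting action on the reduced final Bloch vectors, $(n'_{1x},n'_{1y})\mapsto(n'_{1x},-n'_{1y})$ and $(n'_{2x},n'_{2y})\mapsto(-n'_{2x},n'_{2y})$, agrees with what one reads off the appendix expressions, so $F\mapsto-F$ with norms unchanged. Your route buys a structural, gate-level explanation of the symmetry and makes the invariance of the undefined-PSF null set explicit (via Theorem~\ref{th:zeroset} and the fact that $\tau$ preserves $\|\bm_1'\|$, $\|\bm_2'\|$), points the paper's proof leaves implicit; the paper's route is shorter but hides the key computation.
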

The proof of the theorem  is based on symmetry properties of
$F(\rho_{1,p}\otimes\rho_{2,p},U_c)$:
\begin{proof}
\begin{align}
&\braket{F(\rho_{1,p}\otimes\rho_{2,p},U_c)}= \frac{1}{V}\int_\Omega d\Omega\, F\\
&= \frac{1}{(4\pi)^2}\int_{0}^{2\pi}d\varphi_1\int_{0}^{2\pi}d\varphi_2\int_{0}^{\pi}d\theta_1\sin{\theta_1}\notag\\
&\qquad\int_{0}^{\pi}d\theta_2\sin{\theta_2}F(\varphi_1,\varphi_2,\theta_1,\theta_2)\label{p11}\\
&= \frac{1}{(4\pi)^2}\int_{\pi}^{3\pi}d\varphi_1\int_{0}^{2\pi}d\varphi_2\int_{0}^{\pi}d\theta_1\sin{\theta_1}\notag\\
&\qquad\int_{0}^{\pi}d\theta_2\sin{\theta_2}F(\varphi_1,\varphi_2,\theta_1,\theta_2)\label{p12}\\
&= \frac{1}{(4\pi)^2}\int_{0}^{2\pi}d\varphi_1\int_{0}^{2\pi}d\varphi_2\int_{0}^{\pi}d\theta_1\sin{\theta_1}\notag\\
&\qquad\int_{0}^{\pi}d\theta_2\sin{\theta_2}F(\varphi_1+\pi,\varphi_2,\theta_1,\theta_2)\label{p13}\\
&= \frac{1}{(4\pi)^2}\int_{0}^{2\pi}d\varphi_1\int_{0}^{2\pi}d\varphi_2\int_{0}^{\pi}d\theta_1\sin{(\pi-\theta_1)}\notag\\
&\qquad\int_{0}^{\pi}d\theta_2\sin{(\pi-\theta_2)}F(-\varphi_1+\pi,-\varphi_2,\pi-\theta_1,\pi-\theta_2)\label{p14}\\
&= \frac{1}{(4\pi)^2}\int_{0}^{2\pi}d\varphi_1\int_{0}^{2\pi}d\varphi_2\int_{0}^{\pi}d\theta_1\sin{\theta_1}\notag\\
&\qquad\int_{0}^{\pi}d\theta_2\sin{\theta_2}(-F(\varphi_1,\varphi_2,\theta_1,\theta_2))\label{p15}\\
&= -\braket{F(\rho_{1,p}\otimes\rho_{2,p},U_c)}\\
&\Rightarrow\braket{F(\rho_{1,p}\otimes\rho_{2,p},U_c)}=0.
\end{align}
In line (\ref{p12})
 we use the fact that the $\varphi_1$-integration
goes over a complete period, and thus the integration limits may be
shifted (It does not matter which $\varphi$-integral will be
shifted). In line (\ref{p13}) we transfer the integral shift to the
function. These two steps also can be seen as a preceding
$R_z$-Rotation of qubit one by  an angle $\pi$.  The mean PSF is
invariant with respect to preceding local transformations, as they do
not change the set of initial states. In line (\ref{p14}) we apply a
symmetry transformation
\begin{align}
\varphi_1 &\mapsto -\varphi_1\\
\varphi_2 &\mapsto -\varphi_2\\
\theta_1 &\mapsto \pi-\theta_1\\
\theta_2 &\mapsto \pi-\theta_2,
\end{align}
that, as a whole, does not change the mean PSF. Direct
calculation
gives line ($\ref{p15}$).
\end{proof}

Now consider the most general $U\in \mathcal{SU}(4)$ (see
Fig.\ref{wholecircuit}). The first two
local unitaries
$V_1$ and $V_2$ in $U$ can be absorbed without restriction of
generality into the creation of the uniformly distributed initial
states.
As a consequence, the only possibility of increasing the mean PSF is
to take into account the local unitaries $W_1\otimes W_2$.
According to Lemma \ref{lem:1}, we can directly rotate the reduced
Bloch vectors obtained after applying $U_c$.
Expressing $W_1$ and $W_2$ by an Euler decomposition with z- and y-rotations,
 $R_z(\sigma_i)R_y(\nu_i)R_z(\mu_i)$, we can restrict ourselves to one of the final z-rotations
  $R_z(\sigma_i)$ without loss of generality as the PSF only measures the relative phase.
   We define the general transformation as
    $U_g(\alpha,\beta,\gamma, \mu_1, \mu_2,\nu_1,\nu_2,\sigma_1)\equiv\left(R_z(\sigma_1)R_y(\nu_1)R_z(\mu_1)\right)\otimes\left(R_y(\nu_2)R_z(\mu_2)\right)\,U_c$
    and look numerically for angles $\alpha,\beta,\gamma,
    \mu_1,\mu_2,\nu_1,\nu_2,\sigma_1$
that maximize the mean PSF. Without loss of generality, we can
restrict all angles to the interval
$[0,2\pi)$
as
  $\braket{F(\alpha,\beta,\gamma,\mu_1,\mu_2,\nu_1,\nu_2,\sigma_1)}$
  is  $2\pi$-periodic in all angles. Numerical results obtained from
  $10^4$
gradient ascents for randomly generated initial
  angles of $U_g$ suggest the conditions
  \begin{equation}\label{condmax}
\alpha,\beta,\gamma \in
  \{\frac{\pi}{4},3\frac{\pi}{4},5\frac{\pi}{4},7\frac{\pi}{4}\}
  \end{equation}
for
  maxima, while the remaining angles need not have discrete values.
For all these values the maximal
mean PSF is estimated  numerically as
$\braket{F}\simeq 0.349$.
When restricting ourselves to a subset of optimal transformations, it is
possible to obtain a discrete set of angles also for the  remaining angles.
E.g.~by setting $\sigma_1=\mu_2=\nu_2=0$,
leading to a final rotation $R_z(\mu_1)R_y(\nu_1)$,
the numerical maximization of the mean PSF gives
\begin{align}
\mu_1&=\left\{\begin{array}{cl} \pi/2 & \mbox{ if }\alpha+\gamma\in\{n\pi|n\in\mathbb{N}\}\\3\pi/2 & \mbox{else,}\end{array}\right. \notag \\
\nu_1&=\left\{\begin{array}{cl} 0 & \mbox{ if } \alpha+\beta\in\{n\pi|n\in\mathbb{N}\}\\\pi & \mbox{else,}\end{array}\right.\label{maxvalnew}
\end{align}
in addition to (\ref{condmax}).
As a successive rotation of one qubit's Bloch vector by $\pi$ changes
the  sign of the PSF one easily obtains the mean PSF's minima from the
conditions for maxima by appending such a rotation, e.g. by shifting
$\sigma_1$ from $0$ to $\pi$.
The position of the maxima and minima are supported by the analytically verifiable fact
that the gradient of
$\braket{F}$ with respect to all eight angles of $U_g$
vanishes there, see appendix \ref{app_gradients}. \\
\begin{figure}
 \includegraphics[width=0.57\linewidth]{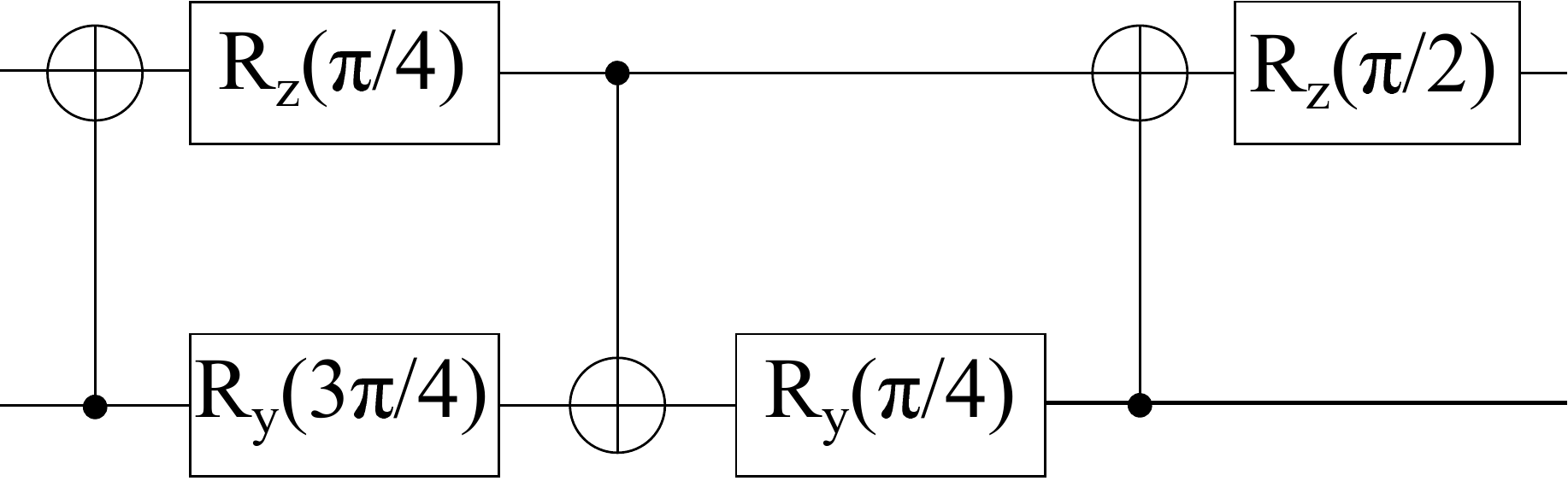}
   \caption{Quantum circuit
     $U_\text{max}$ that maximizes the mean PSF
     for initial pure product states.}
   \label{opt_circuit}
 \end{figure}

We
now examine one of
the unitary transformations that maximizes the PSF,
$U_{\text{max}}=U_{g}(\alpha=3\frac{\pi}{4},\beta=\frac{\pi}{4}
,\gamma=\frac{\pi}{4},\mu_1=\frac{\pi}{2},\nu_1=0)$ in more
detail. The
corresponding optimal quantum circuit is shown in
Fig.\ref{opt_circuit}. The distribution of PSF is shown in
Fig.\ref{dists}, along side with the distribution of $\Delta \varphi$. We
see that $P(\Delta\varphi)$ is symmetric with respect to the
$\Delta\varphi=0$ axis, with two broad maxima in directions close to
$\pm\pi/2$, and a broad minimum for
$\Delta\varphi=\pi$. I.e.~antisynchronization is unlikely, but perfect
QPS is not the most likely outome either.   \\
\begin{figure}
 \includegraphics[width=0.45\linewidth]{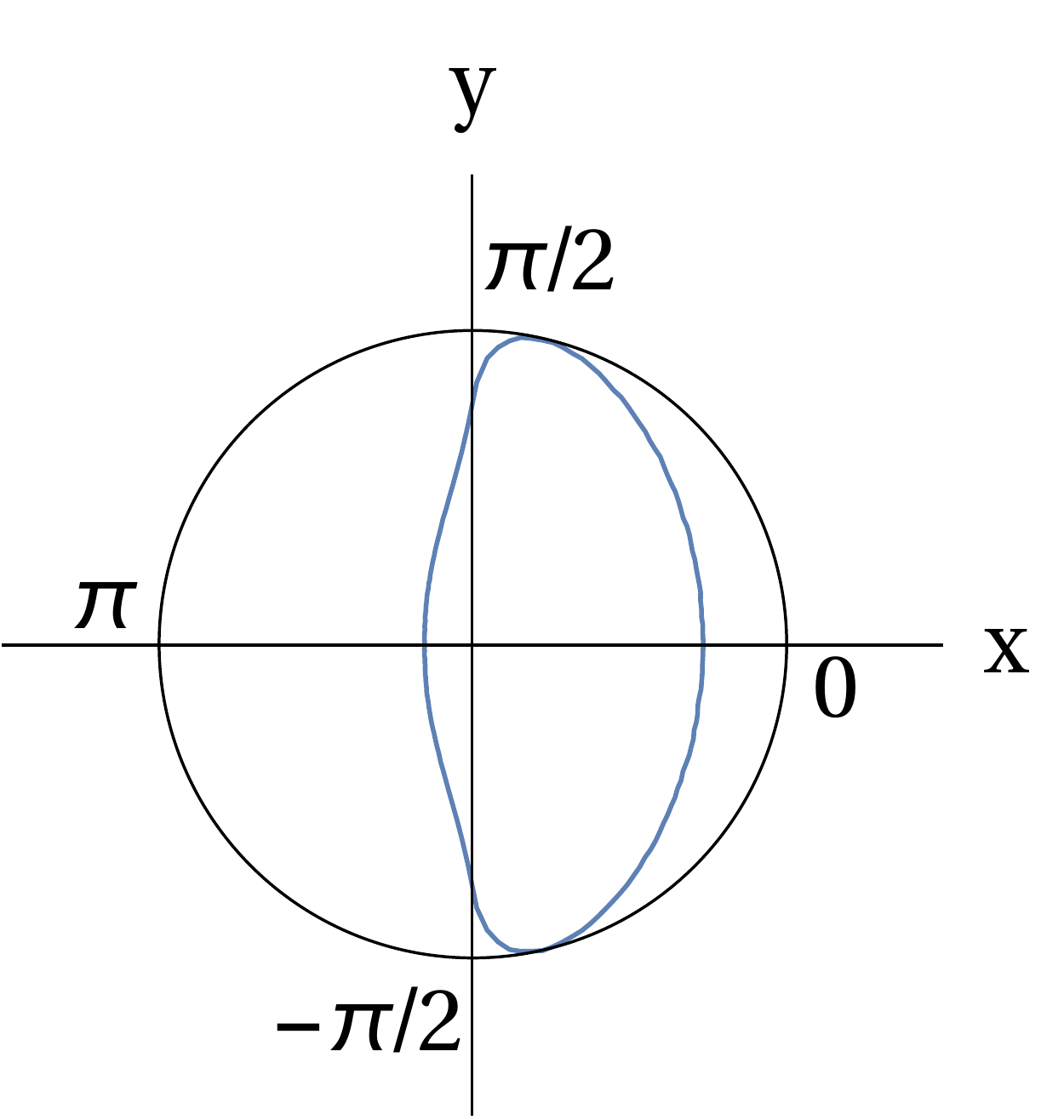}
 \includegraphics[width=0.45\linewidth]{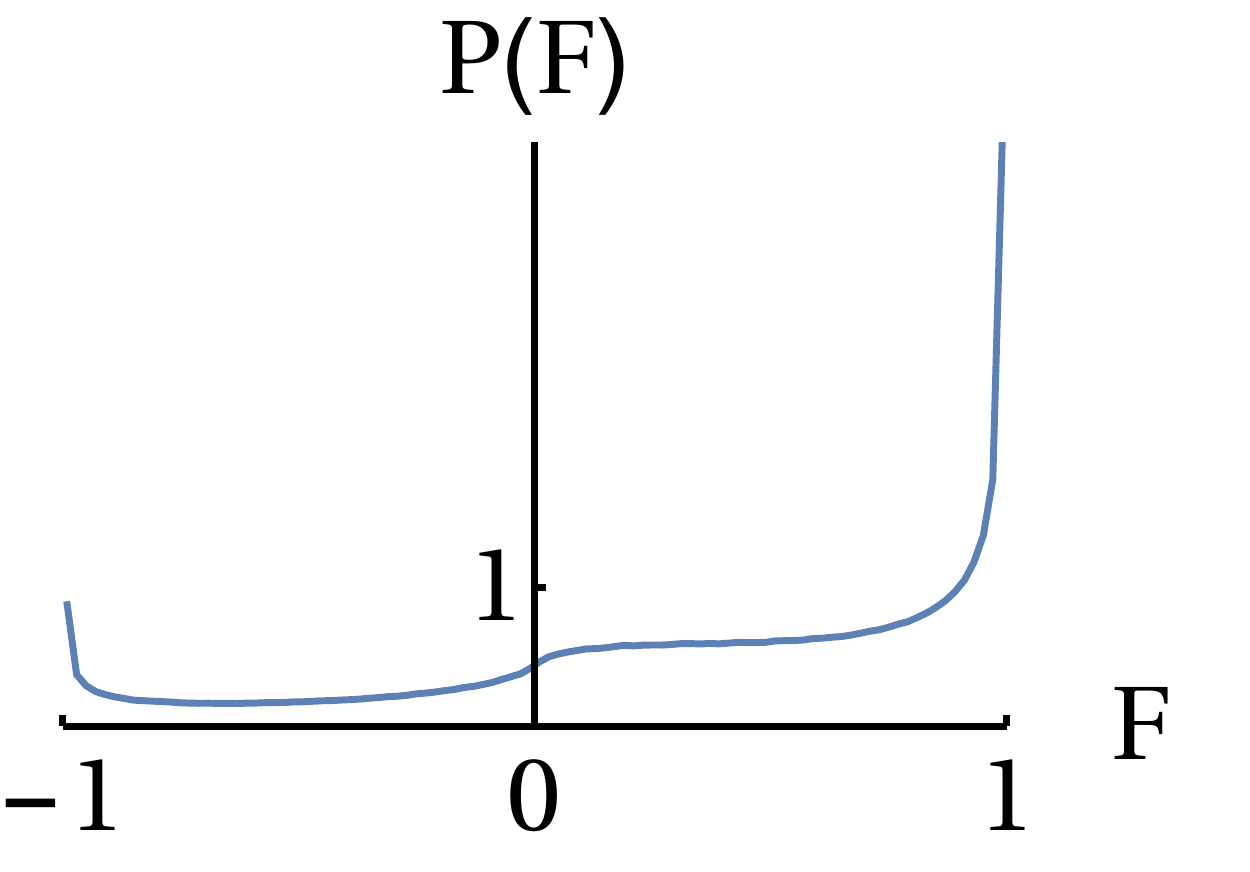}
 \caption{(color online). Distribution of the relative phase
   $P(\Delta\varphi^\prime)$ after the optimal quantum circuit $U_{\text{max}}$ (radial
   coordinate, arbitrary units) as function of $\Delta\varphi^\prime\in(-\pi,\pi]$
   (azimuthal coordinate), i.e. the phase of the second Bloch vector measured relative to the first one, with $\Delta\varphi^\prime=0$ corresponding to the
   $x-$axis (left panel). Corresponding distribution $P(F)$ of the
   PSF, $F=\cos(\Delta\varphi^\prime)$ (right panel). Both distributions are generated numerically.}
 \label{dists}
 \end{figure}
Finally we examine the action of $U_{\rm max}$ on a another subset of
initial states, namely
"equatorial" initial states, with
$\theta_1=\theta_2=\pi/2$ and thus
$n_i=r_i(\cos\varphi_i,\sin\varphi_i,0)$ where $r_i$ is the purity,
  $i\in\{1,2\}$.  These states are important in many applications,
  e.g., linearly polarized photons \cite{bagan_optimal_2006} or the
  BB84 protocol \cite{BB84}.
The transformation $U_\text{max}$ leads to the
transformed Bloch vectors
\begin{eqnarray}
\textbf{n}_1^\prime&=&\frac{1}{2}\left(\begin{array}{c}
     r_1\cos{\varphi_1}-r_2\sin{\varphi_2} \\
     -r_1\sin{\varphi_1}-r_2\cos{\varphi_2} \\
    -r_1r_2\cos(\varphi_1-\varphi_2)
     \end{array}\right),\,\nonumber\\
     \textbf{n}_2^\prime&=&\frac{1}{2}\left(\begin{array}{c}
    r_1\cos{\varphi_1}-r_2\sin{\varphi_2} \\
     -r_1\sin{\varphi_1}-r_2\cos{\varphi_2} \\
    r_1r_2\cos(\varphi_1-\varphi_2)\\
     \end{array}\right).
\end{eqnarray}
The resulting $z$-components are opposite, while the reduced Bloch
vectors are perfectly
synchronized,
$$F\left(\rho_{1}(\theta_1=\frac{\pi}{2})\otimes\rho_{2}(\theta_2=\frac{\pi}{2}),U_\text{max}\right)=1,$$
provided that they are well defined. This means that $U_\text{max}$ achieves
perfect QPS for the subset of equatorial initial states. The same is true for
all transformations satisfying conditions (\ref{condmax}) and
(\ref{maxvalnew}). 
\begin{figure}
 \includegraphics[width=0.9\linewidth]{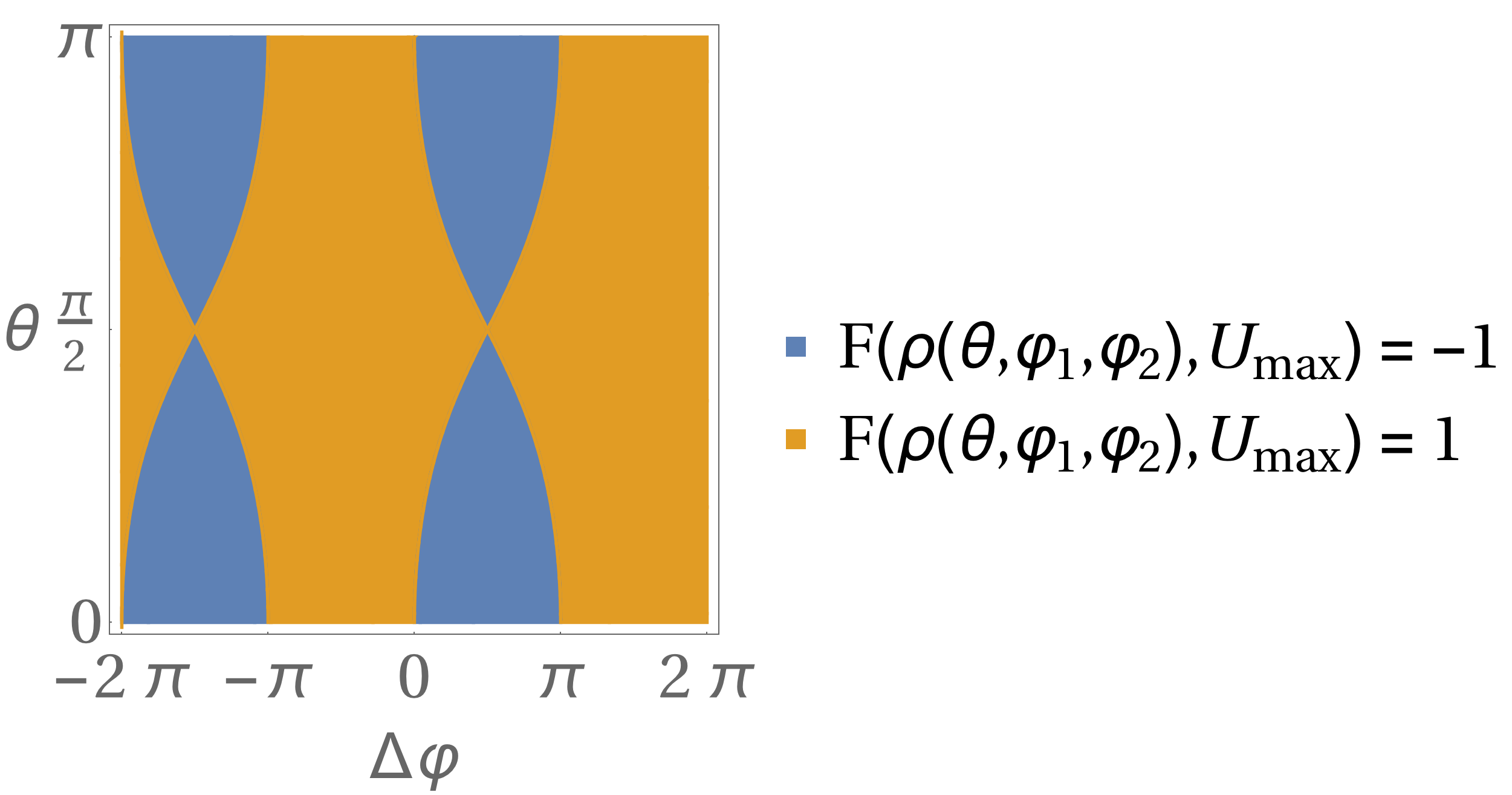}
 \caption{(color online).  PSF for different latitudes $\theta\equiv\theta_1=\theta_2$ as function of $\theta$ and $\Delta\varphi$. While for $\theta=\pi/2$ perfect QPS is achieved, the greater the distance from the equator the worse $U_{\text{max}}$ synchronizes phases.}
 \label{latitudes}
 \end{figure}
One may also wonder about the nature of the final two-qubit state
created by $U_{\rm max}$ and in particular its entanglement. It turns
out that
  the concurrence \cite{nielsen_2000_quantum} of the final state for
  initial pure equatorial
  states ($r_i=1$) is
  $C=\frac{1+\sin(\Delta\varphi)}{2}$. Thus, $U_{\rm max}$
  directly
  encodes the initial relative phase in the final concurrence, such that
  $C=1/2$ corresponds to $\Delta\varphi\in\{0,\pi\}$  and deviations
  from $C=1/2$ are proportional to $\sin(\Delta\varphi)$.
This is by itself an interesting
  property, with possible applications in quantum information
  theory. At the same time it implies that the final entanglement is
  irrelevant for perfect QPS. \\

{\em To summarize,} we have introduced the concept of quantum-phase
synchronization at the example of two qubits.  We have shown that in
contrast to quantum
cloning, 
perfect quantum-phase synchronization of one qubit in an unknown state with $n-1$ qubits in known fixed reference states is possible through joint unitary evolution. For the case of two qubits
  both initially in unknown states, perfect QPS for all initial
states becomes 
impossible through unitary evolution. We have found quantum
circuits that optimize the mean
PSF (averaged over all pure initial product states), and
the distribution
of fidelities and final phase differences for one of the optimal
quantum circuits.
A discrete subset of the optimal
quantum circuits can perfectly quantum-phase synchronize
equatorial initial product
states. Our work opens the road to investigations of quantum-phase
synchronization for larger systems and may find interesting
applications in quantum
information processing. In particular it would be intriguing to see if
phase synchronization can be achieved over distance, and explore
applications in quantum key distribution.\\

\acknowledgements{DB and MK thank the COST 1006 framework for
  support in the initial phase of this project.}
  
\appendix
\widetext
\section{Proof of theorem 1}\label{app_theo1}

The set of initial angles parametrizing initial pure product states is restricted to $\Omega$,
i.e. $0\leq\theta_i\leq\pi,0\leq\varphi_i\le2\pi$ with $i\in\{1,2\}$,
and we simply use the notation $\theta_i,\varphi_i\in\Omega$.
Eight angles $\sigma=(\alpha,\beta,\gamma,\mu_i,\nu_i,\sigma_1)\in\mathbb{R}^8$
parametrize the unitary transformation $U(\sigma)$.
It follows a proof by contradiction.\\

\begin{proof}
Let $U(\sigma_{\text{p}})$, $\sigma_{\text{p}}\in\mathbb{R}^8$,
be the unitary transformation that achieves
$F(\rho(\theta_i,\varphi_i),U(\sigma_{\text{p}}))=1$
$\forall\theta_i,\varphi_i\in{\Omega}$ for which
$\{||\bm_1(\theta_i,\varphi_i,\sigma_{\text{p}})||\neq 0\land||\bm_2(\theta_i,\varphi_i,\sigma_{\text{p}})||\neq0\}$.\\

This means that $\forall\theta_i,\varphi_i\in{\Omega}$ the PSF may
either be undefined,
i.e.~$\{||\bm_1(\theta_i,\varphi_i,\Sigma)||=0\lor||\bm_2(\theta_i,\varphi_i,\Sigma)||=0\}$,
or well defined with $F=1$.
Let us consider discrete subsets
$\Omega_j\subset\Omega$ for $\theta_i,\varphi_i$ and let $\Sigma_j$ be
the set of angles parametrizing the unitary transformation that
achieves perfect QPS for $\theta_i,\varphi_i\in\Omega_j$. Then it follows
by assumption that $\sigma_\text{p}\in\cap_j\Sigma_j$. In the following
we find necessary conditions specifying different $\Sigma_j$, $j=\text{I,II...VI}$,
and obtain a contradiction
by showing that conditions from different $\Sigma_j$ are incompatible,
i.e.~$\cap_j\Sigma_j=\emptyset$.

To find such conditions we do not consider
Bloch vectors after the whole transformation $U(\sigma)$,
but we consider Bloch vectors after the entangling part $U_c$ of $U$,
see equation (11) in the main text. It is worth recalling that according
to lemma 1 final local transformations can be taken into account
by directly rotating the Bloch vectors. As final local transformations are
decomposed in $z$- and $y$-rotations, $\left(R_z(\sigma_1)R_y(\nu_1)R_z(\mu_1)\right)\otimes\left(R_y(\nu_2)R_z(\mu_2)\right)$,
their effect on the Bloch vectors can be taken easily into account.

Note, that oppositely
directed Bloch vectors differ in their phase by $\pi $, or their
phases are undefined. Aligned Bloch vectors have the same phase, given
it is well defined. Remarkably, this does not change after synchronous
rotations. Thereby, two initial states leading to opposite Bloch
vectors for qubit one and aligned Bloch vectors for qubit two (or vice versa) are
particularly useful: $F=1$ is impossible as first Bloch vectors
exhibit identical rotations as well as second Bloch vectors. Thus, at least
one of the initial states has to lead to an undefined PSF.

To simplify notation we define for an arbitrary angle $\delta$ the
corresponding set $S_\delta\equiv\{\delta+n\pi|n\in\mathbb{Z}\}$
that contains all angles modulo $\pi$.
Further, we use $S\equiv S_0\cup S_{\pi/2}$.

Bloch vectors after $U_c(\alpha,\beta,\gamma)$ are given by
\small\begin{equation}
\textbf{n}_1=\left(\begin{array}{c}
\cos \gamma  (\cos \alpha  \sin \theta_2 \cos \varphi_2+\sin \alpha  \sin \theta_1 \cos \theta_2 \cos \varphi_1)-\sin \gamma (\cos \alpha
    \cos \theta_1 \sin \theta_2 \sin \varphi_2+\sin \alpha  \sin \theta_1 \sin \varphi_1)\\
   \sin \gamma  (\cos \beta \cos \theta_1 \sin \theta_2 \cos \varphi_2-\sin \beta \sin \theta_1
   \cos \varphi_1)+\cos \gamma  (\cos \beta \sin \theta_2 \sin \varphi_2-\sin \beta \sin \theta_1 \cos \theta_2
   \sin \varphi_1)\\
   \cos \alpha (\cos \beta \cos \theta_2+\sin \beta \sin \theta_1 \sin \theta_2 \sin \varphi_1 \sin \varphi_2)-\sin \alpha (\cos \beta \sin \theta_1 \sin \theta_2 \cos \varphi_1 \cos
   \varphi_2+\sin \beta \cos \theta_1)\\
\end{array}\right),
\end{equation}

\begin{equation}
\textbf{n}_2=\left(\begin{array}{c}
\cos \gamma (\cos \beta \sin \theta_1 \cos \varphi_1+\sin \beta \cos \theta_1 \sin \theta_2 \cos \varphi_2)-\sin \gamma (\cos \beta \sin \theta_1 \cos \theta_2 \sin \varphi_1+\sin \beta \sin\theta_2 \sin \varphi_2)\\
   \sin \gamma (\cos \alpha \sin \theta_1
   \cos \theta_2 \cos \varphi_1-\sin \alpha \sin \theta_2 \cos \varphi_2)+\cos \gamma (\cos \alpha \sin \theta_1 \sin \varphi_1-\sin \alpha \cos \theta_1 \sin \theta_2
   \sin \varphi_2)\\
   \cos \alpha (\cos \beta \cos \theta_1-\sin \beta \sin\theta_1 \sin \theta_2 \cos \varphi_1 \cos
   \varphi_2)+\sin \alpha (\cos \beta \sin \theta_1 \sin \theta_2 \sin \varphi_1 \sin \varphi_2-\sin \beta
   \cos \theta_2)\\
\end{array}\right).
\end{equation}\normalsize

\textbf{I)}\\

Let $\Omega_\text{I}$ consist of two sets of initial angles $\theta_1=\theta_2=0$ for state 1 and $\theta_1=\pi,\theta_2=0$ for state 2 that (after the transformation $U_c(\alpha,\beta,\gamma)$) lead to Bloch vectors

\begin{minipage}[t]{0.47\textwidth}
	\underline{state 1}: $\theta_1=\theta_2=0$:\\
$n_1=(0,0,\cos (\alpha +\beta ))$\\
$n_2=(0,0,\cos (\alpha +\beta ))$\\
	\end{minipage}
	\hfill
	\begin{minipage}[t]{0.47\textwidth}
		\underline{state 2}: $\theta_1=\pi$, $\theta_2=0$:\\
$n_1=(0,0,\cos(\alpha-\beta ))$\\
$n_2=(0,0, -\cos(\alpha -\beta ))$\\
	\end{minipage}\\
To derive conditions specifying $\Sigma_\text{I}$ it is appropriate to
treat cases of vanishing Bloch vector components  separately. For
$\alpha+\beta\in S_{\pi/2}$ and $\alpha-\beta\notin S_{\pi/2}$, final
Bloch vectors of state 1 are zero, corresponding to maximally mixed sub-states
of qubit one and two. This leads independently from local transformations
to an undefined PSF for state 1, while Bloch vectors of state 2 have non-vanishing,
oppositely directed $z$-components. Undefined PSF is obtained when at least
one Bloch vectors remains on the $z$-axis. This allows arbitrary $z$-rotations
($\mu_i,\sigma_1$) while $\nu_1\in S_0$ or $\nu_2\in S_0$ for $y$-rotations.
$F=1$ requires to rotate Bloch vectors away from the $z$-axis by $\nu_1\notin S_0$
and $\nu_2\notin S_0$, ensuring that PSF is well defined. Then, after $y$-rotations,
Bloch vectors of state 2 lie in the $x$-$z$-plane and, thus, having synchronized
or anti-synchronized phases. Already synchronized phases require $\sigma_1=0$
modulo $2\pi$ to not destroy phase synchronization, while anti-synchronized
phases require $\sigma_1=\pi$ modulo $2\pi$. Altogether, this
restricts $\sigma_1$ to $S_0$.

For $\alpha-\beta\in S_{\pi/2}$ and $\alpha +\beta \notin S_{\pi/2}$ Bloch
vectors of state 2 are zero while Bloch vectors of state 1 have non-vanishing
aligned z-components. Bloch vectors of state 1 lead to an undefined PSF if
$\nu_1\in S_0$ or $\nu_2\in S_0$ while $\sigma_1\in S_0$ is required to
obtain $F=1$ similar to above.

For $\alpha+\beta\in S_{\pi/2}$ and $\alpha-\beta\in S_{\pi/2}$ Bloch vectors
of both states are zero, which does not imply further conditions on
$\nu_i$, $\mu_i$ or $\sigma_1$.

For $\alpha+\beta\notin  S_{\pi/2}$ and $\alpha-\beta\notin  S_{\pi/2}$
both states have Bloch vectors on the $z$-axis. Either first Bloch vectors
are aligned and second opposite or vice versa. Thus, $F=1$ is impossible
for both states. Undefined PSF for at least one state is obtained from $\nu_1\in S_0$ or $\nu_2\in S_0$. This actually makes PSF undefined for both states.\\

In the following we summarize conditions in tables as the following (\ref{tab:I}).
Different rows represent different cases while rows correspond to angles for which
exist conditions.  Note that for $\sigma_p\in\Sigma_\text{I}$ only one of the cases
labeled by i)1., i)2., ...iv)2. in table \ref{tab:I} has
to be fulfilled. The first part of labeling, i.e.~i),ii),...,
corresponds to the first column ($\alpha+\beta$ in table \ref{tab:I}),
counting up if conditions for that angle change,
while the second part of the labeling, 1.,2.,..., simply
counts the cases for each i),ii),... . $\overline{X}$ refers to the
complement of the set $X$. Blank table entries indicate that there
is no condition for the corresponding angle and case.
\begin{table}[h!]
\caption{Summarizing all possible cases with conditions on $\Sigma_\text{I}$. The first column labels the cases while other columns specify angles that are restricted to sets which are written in the rows (different cases).}\label{tab:I}
  \begin{center}
  \bgroup
\def\arraystretch{1.5}
    \begin{tabular}{rrccccc}
  \multicolumn{2}{c}{\textbf{I}} & $\alpha+\beta$ & $\alpha-\beta$ & $\nu_1$ & $\nu_2$ & $\sigma_1$\\
\hline
i) & 1. & $S_{\pi/2}$ & $\overline{S}_{\pi/2}$ & $S_0$& & \\
& 2.& " & " & & $S_0$ & \\
& 3.& " & " & & & $S_0$\\
ii) & 1. & $\overline{S}_{\pi/2}$ & $S_{\pi/2}$ & $S_0$& & \\
& 2.& " & " & & $S_0$ & \\
& 3. & " & " & & & $S_0$\\
iii)&  & $S_{\pi/2}$ & $S_{\pi/2}$ & & & \\
iv)& 1. & $\overline{S}_{\pi/2}$ & $\overline{S}_{\pi/2}$ & $S_0$& & \\
&2.& " & " & & $S_0$ & \\
\hline
    \end{tabular}
    \egroup
  \end{center}
\end{table}

\textbf{II)}\\

\noindent Let $\Omega_\text{II}$ consist of two sets of initial angles $\theta_1=\theta_2=\pi/2$, $\varphi_1=0$, $\varphi_2=\pi/2$ for state 3 and  $\theta_1=\theta_2=\pi/2$, $\varphi_1=\pi$, $\varphi_2=\pi/2$ for state 4 that lead to Bloch vectors

\vspace{0.25cm}
\noindent \begin{minipage}[t]{0.47\textwidth}
	\underline{state 3}: $\theta_1=\theta_2=\pi/2, \varphi_1=0, \varphi_2=\pi/2$:\\
$n_1=(0,\cos(\beta+\gamma),0)$\\
$n_2=(\cos(\beta+\gamma),0,0)$\\
	\end{minipage}
	\hfill
	\begin{minipage}[t]{0.47\textwidth}
		\underline{state 4}:  $\theta_1=\theta_2=\pi/2, \varphi_1=\pi, \varphi_2=\pi/2$:\\
$n_1=(0,\cos(\beta-\gamma),0)$\\
$n_2=(-\cos(\beta-\gamma),0,0)$\\
	\end{minipage}\\
For $\beta+\gamma\in S_{\pi/2}$ and $\beta-\gamma\notin S_{\pi/2}$ as well as for $\beta+\gamma\notin S_{\pi/2}$ and $\beta-\gamma\in S_{\pi/2}$ we refrain from giving further conditions.
For $\beta+\gamma\in S_{\pi/2}$ and $\beta-\gamma\in S_{\pi/2}$ Bloch vectors are zero.
For $\beta+\gamma\notin S_{\pi/2}$ and $\beta-\gamma\notin S_{\pi/2}$ both states have Bloch vectors with a non-vanishing component. PSF is undefined if the first or second Bloch vector is mapped onto the $z$-axis. To map the first Bloch vector onto the $z$-axis (which is identical for both first Bloch vectors), the first $z$-rotation has to rotate to the $x$-axis ($\mu_1\in S_{\pi/2}$) such that the $y$-rotation can map the Bloch vector onto the $z$-axis ($\nu_1\in S_{\pi/2}$). Similarly, one finds for the second Bloch vector $\mu_2\in S_0$, $\nu_2\in S_{\pi/2}$. Similar to I), it is impossible to achieve $F=1$ for both states.\\

\textbf{III)}\\

\noindent \begin{minipage}[t]{0.47\textwidth}
	\underline{state 5}: $\theta_1=\theta_2=\pi/2, \varphi_1=\pi/2, \varphi_2=0$:\\
$n_1=(\cos(\alpha+\gamma),0,0)$\\
$n_2=(0,\cos(\alpha+\gamma),0)$\\
	\end{minipage}
	\hfill
	\begin{minipage}[t]{0.47\textwidth}
		\underline{state 6}:  $\theta_1=\theta_2=\pi/2, \varphi_1=\pi/2, \varphi_2=\pi$:\\
$n_1=(-\cos(\alpha-\gamma),0,0)$\\
$n_2=(0,\cos(\alpha-\gamma),0)$\\
	\end{minipage}\\
Exchanging $\alpha$ with $\beta$ and exchanging the first with the second Bloch vectors maps state 3 onto 5 and state 4 onto 6. Similarly, the conditions can be mapped by exchanging $\alpha$ with $\beta$ and by exchanging the local transformations of qubit one and two ($\mu_1\leftrightarrow\mu_2$, $\nu_1\leftrightarrow\nu_2$). Conditions from II and III are summarized in tables \ref{tab:II}.\\

\begin{table}[h!]
\caption{Summarizing cases with conditions on $\Sigma_\text{II}$ (left) and $\Sigma_\text{III}$ (right). The first column labels the cases while other columns specify angles that are restricted to sets which are written in the rows (different cases).}\label{tab:II}
  \begin{center}
  \bgroup
\def\arraystretch{1.5}
\setlength{\tabcolsep}{0.2cm}
    \begin{tabular}{rrcccccc}
  \multicolumn{2}{c}{\textbf{II}} & $\beta+\gamma$ & $\beta-\gamma$ & $\mu_1$ & $\nu_1$ & $\mu_2$ & $\nu_2$\\
\hline
i) & 1. & $S_{\pi/2}$ & $\overline{S}_{\pi/2}$ & & & & \\
ii) & 1. & $\overline{S}_{\pi/2}$ & $S_{\pi/2}$ & & & & \\
iii)&  & $S_{\pi/2}$ & $S_{\pi/2}$ & & & & \\
iv)& 1. & $\overline{S}_{\pi/2}$ & $\overline{S}_{\pi/2}$ & $S_{\pi/2}$ & $S_{\pi/2}$ & &  \\
&2.& " & " & &  & $S_0$ & $S_{\pi/2}$ \\
\hline
    \end{tabular}
    \qquad\qquad
\begin{tabular}{rrcccccc}
  \multicolumn{2}{c}{\textbf{III}} & $\alpha+\gamma$ & $\alpha-\gamma$ & $\mu_1$ & $\nu_1$ & $\mu_2$ & $\nu_2$\\
\hline
i) & 1. & $S_{\pi/2}$ & $\overline{S}_{\pi/2}$ & & & & \\
ii) & 1. & $\overline{S}_{\pi/2}$ & $S_{\pi/2}$ & & & & \\
iii)&  & $S_{\pi/2}$ & $S_{\pi/2}$ & & & & \\
iv)&1. & $\overline{S}_{\pi/2}$ & $\overline{S}_{\pi/2}$ &  &  &$S_{\pi/2}$ & $S_{\pi/2}$ \\
& 2. & " & "  &  $S_0$ &  $S_{\pi/2}$ & &\\

\hline
    \end{tabular}
    \egroup
  \end{center}
\end{table}

\textbf{IV)}\\

\noindent Let $\Omega_\text{IV}$ contain 16 sets of initial angles parametrizing states 7-22. We look at them as groups of four that are connected by maps allowing to infer conditions for further groups from conditions of the first group. The first group contains states 7-10,

\vspace{0.25cm}
\noindent\begin{minipage}[t]{0.47\textwidth}
	\underline{state 7}: $\theta_1=\pi, \theta_2=\pi/2, \varphi_2=\gamma$:\\
$n_1=(\cos\alpha,0,\sin\alpha\sin\beta)$\\
$n_2=(-\sin\beta,0,-\cos\alpha\cos\beta)$\\
	\end{minipage}
	\hfill
	\begin{minipage}[t]{0.47\textwidth}
		\underline{state 8}:  $\theta_1=0, \theta_2=\pi/2, \varphi_2=-\gamma$:\\
$n_1=(\cos\alpha,0,-\sin\alpha\sin\beta)$\\
$n_2=(\sin\beta,0,\cos\alpha\cos\beta)$\\
	\end{minipage}\\
\begin{minipage}[t]{0.47\textwidth}
	\underline{state 9}: $\theta_1=0, \theta_2=\pi/2, \varphi_2=\pi-\gamma$:\\
$n_1=(-\cos\alpha,0,-\sin\alpha\sin\beta)$\\
$n_2=(-\sin\beta,0,\cos\alpha\cos\beta)$\\
	\end{minipage}
	\hfill
	\begin{minipage}[t]{0.47\textwidth}
		\underline{state 10}:  $\theta_1=\pi, \theta_2=\pi/2, \varphi_2=-\pi+\gamma$:\\
$n_1=(-\cos\alpha,0,\sin\alpha\sin\beta)$\\
$n_2=(\sin\beta,0,-\cos\alpha\cos\beta)$\\
	\end{minipage}\\
Note that second Bloch vectors of state 7 and 8 as well as of state 9 and 10 are always directed oppositely (unless states are maximally mixed).

For $\alpha\in S_0$ the first Bloch vectors are
identical for state 7 and 8 (9 and 10) only having a non-vanishing $x$-component.
 Thus $F=1$ is impossible. Mapping the
first Bloch vectors onto the $z$-axis requires
$\mu_1\in S_0$ in order to keep the $y$-component zero, and $\nu_1\in S_{\pi/2}$
in order to map the $x$-component onto the $z$-axis. On the other hand,
mapping the second Bloch vectors onto the
$z$-axis requires the following, dependent on values of $\beta$:
For $\beta\in S_0$ second Bloch vectors lie on the $z$-axis and
stay there if $\nu_2\in S_0$. For $\beta\in S_{\pi/2}$  second
Bloch vectors lie on the $x$-axis and are mapped onto the $z$-axis
by $\mu_2\in S_0$ (keeping the $y$-component zero) and $\nu_2\in S_{\pi/2}$
(rotating $x$-component onto the $z$-axis). For $\beta\notin S$
second Bloch vectors have non-vanishing $x$- and $z$-components,
and to map them onto the $z$-axis we would need
$\mu_2\in S_0$ (keeping the $y$-component zero) and $\nu_2\in S_{\delta}$ with
$\delta\equiv \arctan\left(\frac{n_{2,x}}{n_{2,z}}\right) =\pm\arctan(\frac{\sin\beta}{\cos\alpha\cos\beta})\notin S$, where the minus sign of $\delta$ belongs to states 7 and 8
and the plus to states 9 and 10. Thus, for $\beta\notin S$ second
Bloch vectors can not all be mapped onto the $z$-axis.

For $\alpha\in S_{\pi/2}$ and $\beta\in S_0$ Bloch vectors are zero. For $\alpha\in S_{\pi/2}$ and $\beta\notin S_0$
first Bloch vectors have a non-vanishing $z$-component
while second Bloch vectors have a non-vanishing
$x$-component. Comparing states 7 and 9 first Bloch vectors are oppositely
directed while second Bloch vectors are aligned, implying that
$F=1$ is impossible. PSF is undefined
if $\nu_1\in S_0$, keeping the first Bloch vectors
on the $z$-axis, or if second Bloch vectors are rotated
to the $z$-axis, $\mu_2\in S_0$, $\nu_2\in S_{\pi/2}$.

For $\alpha\notin S$ and $\beta\in S_0$ first Bloch vectors
have a non-vanishing $x$-component and second Bloch
vectors have a non-vanishing $z$-component. For states 7 and 8
first Bloch vectors are aligned while second Bloch vectors
are directed oppositely. $F=1$ is impossible, and PSF is undefined if
for the first Bloch vectors $\mu_1\in S_0$, $\nu_1\in S_{\pi/2}$ or if
for the second Bloch vector $\nu_2\in S_0$.

For $\alpha\notin S$ and $\beta\in S_{\pi/2}$, looking at states 7 and 9
second Bloch vectors are identical while first Bloch vectors are directed
oppositely. Then, $F=1$ is impossible and an undefined PSF is obtained by
$\mu_1\in S_0$, $\nu_1\in S_{\delta_2}$ with $\delta_2\equiv-\arctan\left(\frac{\cos\alpha}{\sin\alpha\sin\beta}\right)\notin S$ for first Bloch vectors or
by $\mu_2\in S_0$, $\nu_2\in S_{\pi/2}$ for second Bloch vectors.
Equally looking at states 8 and 10, second Bloch vectors are identical
while first Bloch vectors are directed oppositely. Then, $F=1$ is impossible
and an undefined PSF is obtained by $\mu_1\in S_0$, $\nu_1\in S_{-\delta_2}$
for first Bloch vectors or by $\mu_2\in S_0$, $\nu_2\in S_{\pi/2}$ for second Bloch vectors.
Thus, as conditions for first Bloch vectors can not be true at the same time,
conditions for second Bloch vectors have to be true for $\alpha\notin S$ and
$\beta\in S_{\pi/2}$.

For $\alpha\notin S$ and $\beta\notin S$
all Bloch vectors have non-vanishing $x$- and $y$-components.
In the following table we compare states pairwise to find necessary conditions
for $F=1$ or an undefined PSF. We use the angles
$\delta_1\equiv-\arctan\left(\frac{\sin\beta}{\cos\alpha\cos\beta}\right)\notin S$ and
$\delta_2$ defined as above.

\begin{table}[h!]
\caption{Summarizing conditions for states 7-10 for the case that $\alpha,\beta\notin S$. For each pair of states PSF needs to be one or undefined.}\label{tab:notinS}
  \begin{center}
  \bgroup
\def\arraystretch{1.5}
\setlength{\tabcolsep}{0.2cm}
    \begin{tabular}{lrcccc}
 states &  &$\mu_1$ & $\nu_1$ & $\mu_2$ & $\nu_2$  \\
\hline
7 \& 8 & $F=1:$ & $S_0$ & $\overline{S}_0$ & & \\
& undefined: & & & $S_0$ & $S_{\delta_1}$ \\
9 \& 10 & $F=1:$ & $S_0$ & $\overline{S}_0$ & & \\
& undefined: & & & $S_0$ & $S_{-\delta_1}$ \\
7 \& 9 & $F=1:$ & &  & $S_0$ & $\overline{S}_0$  \\
& undefined:  & $S_0$ & $S_{\delta_2}$ & & \\
8 \& 10 & $F=1:$ & &  & $S_0$ & $\overline{S}_0$ \\
& undefined: &  $S_0$ & $S_{-\delta_2}$ & & \\
\hline
    \end{tabular}
    \egroup
  \end{center}
\end{table}

It can be seen from table \ref{tab:notinS} that it is not possible
to achieve an undefined PSF for all states. It follows by this that
the conditions $\mu_1,\mu_2\in S_0$ and $\nu_1,\nu_2\notin S_0$
for $F=1$ need to be true for $\alpha,\beta\notin S$.\\

Group two consists of states 11-14,

\noindent\begin{minipage}[t]{0.47\textwidth}
	\underline{state 11}: $\theta_1=\pi, \theta_2=\pi/2, \varphi_2=\pi/2+\gamma$:\\
$n_1=(0,\cos\beta,\sin\alpha\sin\beta)$\\
$n_2=(0,\sin\alpha,-\cos\alpha\cos\beta)$\\
	\end{minipage}
	\hfill
	\begin{minipage}[t]{0.47\textwidth}
	\underline{state 12}: $\theta_1=0, \theta_2=\pi/2, \varphi_2=\pi/2-\gamma$:\\
$n_1=(0,\cos\beta,-\sin\alpha\sin\beta)$\\
$n_2=(0,-\sin\alpha,\cos\alpha\cos\beta)$\\
	\end{minipage}\\
	\begin{minipage}[t]{0.47\textwidth}
		\underline{state 13}:  $\theta_1=0, \theta_2=\pi/2, \varphi_2=-\pi/2-\gamma$:\\
$n_1=(0,-\cos\beta,-\sin\alpha\sin\beta)$\\
$n_2=(0,\sin\alpha,\cos\alpha\cos\beta)$\\
	\end{minipage}
	\hfill
	\begin{minipage}[t]{0.47\textwidth}
		\underline{state 14}:  $\theta_1=\pi, \theta_2=\pi/2, \varphi_2=-\pi/2+\gamma$:\\
$n_1=(0,-\cos\beta,\sin\alpha\sin\beta)$\\
$n_2=(0,-\sin\alpha,-\cos\alpha\cos\beta)$\\
	\end{minipage}\\
To map states 7 onto 11, 8 onto 12, 9 onto 13 and 10 onto 14 we can exchange $\alpha$ with $\beta$ and rotate after the transformation $U_c$ first Bloch vectors by $R_z(\pi/2)$ and second Bloch vectors by $R_z(-\pi/2)$. Thus, conditions from IV can be mapped onto conditions from V by  exchanging $\alpha\leftrightarrow\beta$ and by shifting $\mu_1\rightarrow\mu_1+\frac{\pi}{2}$ and $\mu_2\rightarrow\mu_2-\frac{\pi}{2}$.\\

Group three consists of states 15-18,

\noindent\begin{minipage}[t]{0.47\textwidth}
	\underline{state 15}: $\theta_1=\pi/2, \theta_2=\pi, \varphi_1=\gamma$:\\
$n_1=(-\sin\alpha,0,-\cos\alpha\cos\beta)$\\
$n_2=(\cos\beta,0,\sin\alpha\sin\beta)$\\
	\end{minipage}
	\hfill
	\begin{minipage}[t]{0.47\textwidth}
	\underline{state 16}: $\theta_1=\pi/2, \theta_2=0, \varphi_1=-\gamma$:\\
$n_1=(\sin\alpha,0,\cos\alpha\cos\beta)$\\
$n_2=(\cos\beta,0,-\sin\alpha\sin\beta)$\\
	\end{minipage}\\
	\begin{minipage}[t]{0.47\textwidth}
		\underline{state 17}:  $\theta_1=\pi/2, \theta_2=0, \varphi_1=\pi-\gamma$:\\
$n_1=(-\sin\alpha,0,\cos\alpha\cos\beta)$\\
$n_2=(-\cos\beta,0,-\sin\alpha\sin\beta)$\\
	\end{minipage}
	\hfill
	\begin{minipage}[t]{0.47\textwidth}
		\underline{state 18}:  $\theta_1=\pi/2, \theta_2=\pi, \varphi_1=\pi+\gamma$:\\
$n_1=(\sin\alpha,0,-\cos\alpha\cos\beta)$\\
$n_2=(-\cos\beta,0,\sin\alpha\sin\beta)$\\
	\end{minipage}\\
As for group two, states of group one are mapped onto states of group three if we exchange $\alpha$ with $\beta$ and exchange the Bloch vectors. Thus, conditions from IV can be mapped onto conditions from VI by the following exchange operations $\alpha\leftrightarrow\beta$, $\mu_1\leftrightarrow\mu_2$, $\nu_1\leftrightarrow\nu_2$ and by mapping $\sigma_1\rightarrow-\sigma_1$.\\
	
Group four consists of states 19-22,

\noindent\begin{minipage}[t]{0.47\textwidth}
	\underline{state 19}: $\theta_1=\pi/2, \theta_2=\pi, \varphi_1=\pi/2+\gamma$:\\
$n_1=(0,\sin\beta,-\cos\alpha\cos\beta)$\\
$n_2=(0,\cos\alpha,\sin\alpha\sin\beta)$\\
	\end{minipage}
	\hfill
		\begin{minipage}[t]{0.47\textwidth}
		\underline{state 20}:  $\theta_1=\pi/2, \theta_2=0, \varphi_1=\pi/2-\gamma$:\\
$n_1=(0,-\sin\beta,\cos\alpha\cos\beta)$\\
$n_2=(0,\cos\alpha,-\sin\alpha\sin\beta)$\\
	\end{minipage}\\
	\begin{minipage}[t]{0.47\textwidth}
	\underline{state 21}: $\theta_1=\pi/2, \theta_2=0, \varphi_1=-\pi/2-\gamma$:\\
$n_1=(0,\sin\beta,\cos\alpha\cos\beta)$\\
$n_2=(0,-\cos\alpha,-\sin\alpha\sin\beta)$\\
	\end{minipage}
	\hfill
	\begin{minipage}[t]{0.47\textwidth}
		\underline{state 22}:  $\theta_1=\pi/2, \theta_2=\pi, \varphi_1=3\pi/2-\gamma$:\\
$n_1=(0,-\sin\beta,-\cos\alpha\cos\beta)$\\
$n_2=(0,-\cos\alpha,\sin\alpha\sin\beta)$\\
	\end{minipage}
To map states from group one onto states of group four we rotate after the transformation $U_c$ first Bloch vectors by $R_z(\pi/2)$ and second Bloch vectors by $R_z(-\pi/2)$, and then  we exchange Bloch vectors. Thus, conditions from IV can be mapped onto conditions from VII by the following exchange operations $\mu_1\leftrightarrow\mu_2+\frac{\pi}{2}$, $\nu_1\leftrightarrow\nu_2$ and by mapping $\sigma_1\rightarrow-\sigma_1$.\\

Conditions from groups 1-4 are given in the tables \ref{tab:IV} summarizing conditions of $\Sigma_\text{IV}$.\\

\begin{table}[h!]
\caption{Summarizing conditions for various cases of groups 1-4. Combining tables for groups 1-4 one obtains the last table summarizing all possible cases with conditions on $\Sigma_\text{IV}$. The first column labels the cases while other columns specify angles that are restricted to sets which are written in the rows (different cases).}\label{tab:IV}
  \begin{center}
  \bgroup
\def\arraystretch{1.5}
\setlength{\tabcolsep}{0.2cm}
    \begin{tabular}{rrcccccc}
  \multicolumn{2}{c}{\textbf{group 1}} & $\alpha$ & $\beta$ & $\mu_1$ & $\nu_1$  & $\mu_2$ & $\nu_2$\\
\hline
i) & 1. & $S_0$ & $S_0$ & $S_0$ & $S_{\pi/2}$  & & \\
   & 2. & " & " & & &   & $S_0$\\
   & 3. & " & $S_{\pi/2}$ & $S_0$ & $S_{\pi/2}$  & & \\
   & 4. & " & " & &  & $S_0$ & $S_{\pi/2}$\\
   & 5. & " & $\overline{S}$ & $S_0$ & $S_{\pi/2}$  & & \\
ii)& 1. & $S_{\pi/2}$ & $S_0$ &   & & & \\
   & 2. & " & $\overline{S}_0$ &  & $S_0$ & & \\
   & 3. & " & " &  & & $S_0$ & $S_{\pi/2}$\\
iii)& 1.& $\overline{S}$ & $S_0$ & $S_0$ & $S_{\pi/2}$  & & \\
    & 2.& " & " & & &   & $S_0$\\
    & 3.& " & $S_{\pi/2}$ &  &   & $S_0$ & $S_{\pi/2}$ \\
    & 4.& " & $\overline{S}$ & $S_0$  & $\overline{S}_0$ & $S_0$ & $\overline{S}_0$ \\
\hline
    \end{tabular}
\qquad
  \begin{tabular}{rrcccccc}
  \multicolumn{2}{c}{\textbf{group 4}} & $\alpha$ & $\beta$ & $\mu_1$ & $\nu_1$  & $\mu_2$ & $\nu_2$\\
\hline
i) & 1. & $S_0$ & $S_0$ & &   & $S_{\pi/2}$ & $S_{\pi/2}$ \\
   & 2. & " & " &  & $S_0$ &  & \\
   & 3. & " & $S_{\pi/2}$ & &   & $S_{\pi/2}$ & $S_{\pi/2}$ \\
   & 4. & " & " & $S_{\pi/2}$ & $S_{\pi/2}$ &  & \\
   & 5. & " & $\overline{S}$  & &   & $S_{\pi/2}$ & $S_{\pi/2}$  \\
ii)& 1. & $S_{\pi/2}$ & $S_0$ &   & & & \\
   & 2. & " & $\overline{S}_0$ &  & & &  $S_0$\\
   & 3. & " & " &  $S_{\pi/2}$ & $S_{\pi/2}$ & &\\
iii)& 1.& $\overline{S}$ & $S_0$  & &  & $S_{\pi/2}$ & $S_{\pi/2}$\\
    & 2.& " & " &  & $S_0$ & &  \\
    & 3.& " & $S_{\pi/2}$ & $S_{\pi/2}$ & $S_{\pi/2}$  &   & \\
    & 4.& " & $\overline{S}$ & $S_{\pi/2}$  & $\overline{S}_0$ & $S_{\pi/2}$ & $\overline{S}_0$ \\
\hline
    \end{tabular}
\\ \vspace{0.5cm}
    \begin{tabular}{rrcccccc}
  \multicolumn{2}{c}{\textbf{group 2}} & $\alpha$ & $\beta$ & $\mu_1$ & $\nu_1$  & $\mu_2$ & $\nu_2$\\
\hline
i) & 1. & $S_0$ & $S_0$ & $S_{\pi/2}$ & $S_{\pi/2}$ &  & \\
   & 2. & " & " & & &  & $S_0$\\
   & 3. & " & $S_{\pi/2}$ &  & &  &  \\
   & 4. & " & $\overline{S}$  & $S_{\pi/2}$ & $S_{\pi/2}$ & & \\
   & 5. & " & " & & &  & $S_0$\\
ii)& 1. & $S_{\pi/2}$ & $S_0$ & $S_{\pi/2}$ & $S_{\pi/2}$ & & \\
   & 2. & " & " & & & $S_{\pi/2}$ & $S_{\pi/2}$\\
   & 3. & " & $S_{\pi/2}$ &  & $S_0$ & & \\
   & 4. & " & $S_{\pi/2}$ &  &  & $S_{\pi/2}$ & $S_{\pi/2}$ \\
   & 5. & " & $\overline{S}$ &  & & $S_{\pi/2}$ & $S_{\pi/2}$ \\
iii)& 1.& $\overline{S}$ & $S_0$ &  $S_{\pi/2}$ & $S_{\pi/2}$  & & \\
    & 2.& " & $S_{\pi/2}$ &  & $S_0$ & & \\
    & 3.& " & $S_{\pi/2}$ &  & & $S_{\pi/2}$ & $S_{\pi/2}$ \\
    & 4.& " & $\overline{S}$ &  $S_{\pi/2}$ & $\overline{S}_0$ & $S_{\pi/2}$ & $\overline{S}_0$ \\
\hline
    \end{tabular}
\qquad
    \begin{tabular}{rrcccccc}
  \multicolumn{2}{c}{\textbf{group 3}} & $\alpha$ & $\beta$ & $\mu_1$ & $\nu_1$  & $\mu_2$ & $\nu_2$\\
\hline
i) & 1. & $S_0$ & $S_0$  & & & $S_0$ & $S_{\pi/2}$ \\
   & 2. & " & " &  & $S_0$ & & \\
   & 3. & " & $S_{\pi/2}$ &  &   &  &  \\
   & 4. & " & $\overline{S}$   & & & $S_0$ & $S_{\pi/2}$ \\
   & 5. & " & " &  & $S_0$ & & \\
ii)& 1. & $S_{\pi/2}$ & $S_0$ & & & $S_0$ & $S_{\pi/2}$ \\
   & 2. & " & " & $S_0$ & $S_{\pi/2}$ & & \\
   & 3. & " & $S_{\pi/2}$ &  &  & & $S_0$ \\
   & 4. & " & $S_{\pi/2}$ &  $S_0$ & $S_{\pi/2}$ &  & \\
   & 5. & " & $\overline{S}$ & $S_0$ & $S_{\pi/2}$  &  &  \\
iii)& 1.& $\overline{S}$ & $S_0$   & & &  $S_0$ & $S_{\pi/2}$\\
    & 2.& " & $S_{\pi/2}$ &  &  & & $S_0$\\
    & 3.& " & $S_{\pi/2}$ & $S_0$ & $S_{\pi/2}$ & &  \\
    & 4.& " & $\overline{S}$ & $S_0$  & $\overline{S}_0$ & $S_0$ & $\overline{S}_0$ \\
\hline
    \end{tabular}
\\ \vspace{0.5cm}
    \begin{tabular}{rrcccccc}
  \multicolumn{2}{c}{\textbf {IV}} & $\alpha$ & $\beta$ & $\mu_1$ & $\nu_1$ & $\mu_2$ & $\nu_2$\\
\hline
i) & 1. & $S_0$ & $S_0$ & & $S_0$ &  & $S_0$\\
   & 2. & " & $S_{\pi/2}$ & $S_0$ & $S_{\pi/2}$ &  $S_{\pi/2}$ & $S_{\pi/2}$ \\
   & 3. & " & " & $S_{\pi/2}$ & $S_{\pi/2}$ &  $S_0$ & $S_{\pi/2}$\\
ii)& 1. & $S_{\pi/2}$ & $S_0$ & $S_{\pi/2}$ & $S_{\pi/2}$  & $S_0$ & $S_{\pi/2}$ \\
   & 2. & " & " & $S_0$ & $S_{\pi/2}$  & $S_{\pi/2}$ & $S_{\pi/2}$\\
   & 3. & " & $S_{\pi/2}$  & & $S_0$ & & $S_0$ \\
\hline
    \end{tabular}

    \egroup
  \end{center}
\end{table}

\textbf{V)}\\

\noindent $\Omega_\text{V}$ refers to states 23 and 24. We do not derive conditions
for $\Sigma_\text{V}$ yet but will refer to these states later, as we then have a certain set of conditions simplifying the handling
of states 23 and 24.

\vspace{0.25cm}
\noindent\begin{minipage}[t]{0.47\textwidth}
\underline{state 23}:  $\theta_1=\pi/2, \theta_2=\pi/2, \varphi_1=\pi/4, \varphi_2=\pi/4$:\vspace{0.2cm}\\
$n_1 =\left(\frac{\cos(\alpha+\gamma)}{\sqrt{2}},\frac{\cos(\beta+\gamma)}{\sqrt{2}},-\frac{\sin(\alpha-\beta)}{2}\right)$\vspace{0.2cm}\\
$n_2 =\left(\frac{\cos(\beta+\gamma)}{\sqrt{2}},\frac{\cos(\alpha+\gamma)}{\sqrt{2}},\frac{\sin(\alpha-\beta)}{2}\right)$\\
	\end{minipage}
	\hfill
	\begin{minipage}[t]{0.47\textwidth}
\underline{state 24}:  $\theta_1=\pi/2, \theta_2=\pi/2, \varphi_1=5\pi/4, \varphi_2=7\pi/4$:\vspace{0.2cm}\\
$n_1 =\left(\frac{\cos(\alpha-\gamma)}{\sqrt{2}},-\frac{\cos(\beta+\gamma)}{\sqrt{2}},\frac{\sin(\alpha+\beta)}{2}\right)$\vspace{0.2cm}\\
$n_2 =\left(-\frac{\cos(\beta+\gamma)}{\sqrt{2}},-\frac{\cos(\alpha-\gamma)}{\sqrt{2}},\frac{\sin(\alpha+\beta)}{2}\right)$\\
	\end{minipage}\\

\textbf{VI)}\\

\noindent $\Omega_\text{VI}$ refers to states 25 and 26. We do not derive conditions
for $\Sigma_\text{VI}$ yet but will refer to these states later, as we then have a certain set of conditions simplifying the handling of states 25 and 26.
	
\vspace{0.25cm}
\noindent\underline{state 25}: $\theta_1=\pi/4,\theta_2=\pi/4, \varphi_1=0,\varphi_2=0$:\vspace{0.2cm}\\
$n_1=\frac{1}{2}\left(\cos\gamma\left(\sqrt{2}\cos\alpha+\sin\alpha\right),\sin\gamma\left(\cos\beta-\sqrt{2}\sin\beta\right),\sqrt{2}\cos(\alpha+\beta)-\cos\beta\sin\alpha\right)$\vspace{0.2cm}\\
$n_2=\frac{1}{2}\left(\cos\gamma\left(\sqrt{2}\cos\beta+\sin\beta\right),\sin\gamma\left(\cos\alpha-\sqrt{2}\sin\alpha\right),\sqrt{2}\cos(\alpha+\beta)-\cos\alpha\sin\beta\right)$\\

\noindent\underline{state 26}: $\theta_1=\pi/4,\theta_2=\pi/4, \varphi_1=\pi,\varphi_2=0$:\vspace{0.2cm}\\
$n_1=\frac{1}{2}\left(\cos\gamma\left(\sqrt{2}\cos\alpha+\sin\alpha\right),-\sin\gamma\left(\cos\beta-\sqrt{2}\sin\beta\right),-\sqrt{2}\cos(\alpha+\beta)+\cos\beta\sin\alpha\right)$\vspace{0.2cm}\\
$n_2=\frac{1}{2}\left(-\cos\gamma\left(\sqrt{2}\cos\beta+\sin\beta\right),\sin\gamma\left(\cos\alpha-\sqrt{2}\sin\alpha\right),-\sqrt{2}\cos(\alpha+\beta)+\cos\alpha\sin\beta\right)$\\

\vspace{1cm}

\noindent The proof proceeds by considering cases of I) one by one. Proving that each case contradicts
other conditions proves a contradiction to the assumption.

First let us consider one of cases I)i) or I)ii) to be true.
From conditions for $\alpha$ and $\beta$ it follows that
\begin{align}
\alpha+\beta\in S_{\pi/2}&,~\alpha-\beta\notin S_{\pi/2}\notag\\
\Rightarrow 2\alpha,2\beta&\notin S_0\notag\\
\Rightarrow\alpha,\beta&\notin S,\label{eq:contr1}\\
\end{align}
Analogously, $\alpha+\beta\notin S_{\pi/2}$, $\alpha-\beta\in S_{\pi/2}$ leads to the same conclusion.
This already contradicts conditions from IV indicating $\alpha,\beta\in S$.\\

Second let us consider I)iii) to be true. This implies
\begin{align}
\alpha\pm\beta\in S_{\pi/2}\Rightarrow 2\alpha,2\beta\in S_0\Rightarrow \alpha,\beta\in S\notag\\
\text{and } \alpha\in S_0\Leftrightarrow\beta\in S_{\pi/2},\notag\\
\text{ as well as } \alpha\in S_{\pi/2}\Leftrightarrow\beta\in S_0.\label{eq:contr5}
\end{align}
It follows that one of cases IV)i)2., IV)i)3., IV)ii)1. or X)ii)2. has to
be true. This corresponds to two possible cases for the local transformations
denoted by A1 and A2,
\begin{align}
\text{A1: }&\mu_1\in S_0,\nu_1\in S_{\pi/2} \text{ and } \mu_2,\nu_2\in S_{\pi/2} \label{eq:contr6}\\
\text{A2: }&\mu_2\in S_0,\nu_2\in S_{\pi/2}\text{ and }\mu_1,\nu_1\in S_{\pi/2}.\label{eq:contr7}
\end{align}

Alternatives i) and ii) from II) and III) lead, similarly to (\ref{eq:contr1}) to $\alpha\notin S$ or $\beta\notin S$, thus, contradicting
conditions (\ref{eq:contr5}). The combination of I)iii), II)iii) and III)iii)
leads to a contradiction for the conditions for $\alpha,\beta,\gamma$,
while the combination of II)iv) and III)iv) contradicts the conditions
(\ref{eq:contr6}) and (\ref{eq:contr7}), respectively.
There remain two combinations, namely II)iii), III)iv), that agree
with A1, and II)iv), III)iii), that agree with A2.
For $\alpha\in S_0,\beta\in S_{\pi/2}$ II)iii) ($\beta\pm\gamma\in S_{\pi/2}$)
implies $\gamma\in S_0$ (A1.1) while III)iii) implies $\gamma\in S_{\pi/2}$ (A2.1),
and for $\alpha\in S_{\pi/2},\beta\in S_0$ II)iii) implies $\gamma\in S_{\pi/2}$
(A1.2) while III)iii) implies $\gamma\in S_0$ (A2.2).

In the following we look at states 23 and 24 from V), doing a
case-by-case analysis for A1.1, A1.2, A2.1 and A2.2:
\begin{itemize}
\item{A1.1:\\
$\alpha,\gamma\in S_0$, $\beta\in S_{\pi/2}$\\
$\mu_1\in S_0$, $\nu_1\in S_{\pi/2}\text{ and }\mu_2,\nu_2\in S_{\pi/2}$\\
These conditions allow us to give the reduced Bloch vectors for state 15 and
state 16 after the transformation
$(R_y(\nu_1)R_z(\mu_1))\otimes(R_y(\nu_2)R_z(\mu_2))\,U_c$.
This means that  up to a remaining z-rotation ($\sigma_1$) all rotations
are already applied to the Bloch vectors.
\begin{align}
\text{state 23:}\notag\\
m_1&=\left(\mp\frac{1}{2}\sin(\alpha-\beta),0\right), m_2=\left(\pm\frac{1}{2}\sin(\alpha-\beta),0\right)\notag\\
\text{state 24:}\notag\\
m'_1&=\left(\pm\frac{1}{2}\sin(\alpha+\beta),0\right), m'_2=\left(\pm\frac{1}{2}\sin(\alpha+\beta),0\right)\notag
\end{align}
The $\pm $ signs are consistent for each component of the first reduced Bloch vectors
as well as for each component of the second reduced Bloch vectors.
Conditions for $\alpha, \beta, \gamma$ give
\begin{equation}
\sin(\alpha-\beta)=-\sin(\alpha+\beta) \Rightarrow m_1=m'_1\text{ and }m_2= -m'_2.
\end{equation}
This means that the second reduced Bloch vectors are opposite while the first
reduced Bloch vectors are identical. It follows that the remaining z-rotation
($\sigma_1$) neither leads to an undefined PSF nor synchronizes both states.}
\item{A1.2:\\
$\alpha\in S_0$, $\beta,\gamma\in S_{\pi/2}$\\
$\mu_1,\nu_1\in S_{\pi/2}\text{ and }\mu_2\in S_0$, $\nu_2\in S_{\pi/2}$ \\
A1.2 leads analogously to A1.1 to the same conclusion as A1.1.}
\item{A2.1:\\
$\alpha,\gamma\in S_{\pi/2}$, $\beta\in S_0$\\
$\mu_1\in S_0$, $\nu_1\in  S_{\pi/2}\text{ and }\mu_2,\nu_2\in S_{\pi/2}$\\
Again, these conditions allow us to give the reduced Bloch vectors for case 15 and case 16 after the transformation $(R_y(\nu_1)R_z(\mu_1))\otimes(R_y(\nu_2)R_z(\mu_2))\,U_c$.
\begin{align}
\text{state 23:}\notag\\
m_1&=\left(\mp\frac{1}{2}\sin(\alpha-\beta),0\right), m_2=\left(\pm\frac{1}{2}\sin(\alpha-\beta),0\right)\notag\\
\text{state 24:}\notag\\
m'_1&=\left(\pm\frac{1}{2}\sin(\alpha+\beta),0\right), m'_2=\left(\pm\frac{1}{2}\sin(\alpha+\beta),0\right)\notag
\end{align}
Conditions for $\alpha,\beta,\gamma$ give
\begin{equation}
\sin(\alpha-\beta)=\sin(\alpha+\beta)\Rightarrow m_1=-m'_1\text{ and }m_2=m'_2.
\end{equation}
As the first reduced Bloch vectors are opposite while the second reduced Bloch
vectors are identical, the remaining z-rotation ($\sigma_1$) neither leads to
an undefined PSF nor synchronizes both states.}
\item{A2.2:\\
$\alpha\in S_{\pi/2}$, $\beta,\gamma\in S_0$ \\
$\mu_1,\nu_1\in S_{\pi/2}\text{ and }\mu_2\in S_0$, $\nu_2\in S_{\pi/2}$\\
A2.2 leads analogously to A2.1 to the same conclusion as A2.1.}
\end{itemize}
Thus, I)iii) leads to a contradiction.\\

Let us consider I)iv) to be true. $\alpha\pm\beta\notin S_{\pi/2}$ already
contradicts most cases of IV). Still possible are the cases IV)i)1.
and IV)ii)3..  Both imply that $\nu_1,\nu_2\in S_0$. I)iv)1. saying
$\nu_1\in S_0$ and I)iv)2. saying $\nu_2\in S_0$ do not add new conditions and can thus be handled together.
IV)i)1. says $\alpha,\beta\in S_0$ which we denote by B1,
while IV)ii)3. says $\alpha,\beta\in S_{\pi/2}$ which we denote by B2.
Similarly to (\ref{eq:contr1}), each of the cases II)i), II)ii), III)i) and III)ii)
implies $\alpha\notin S$ or $\beta\notin S$, thus
contradicting B1 as well as B2.

Let us consider II)iii) and III)iv) to be true: Given B1, it follows from II)iii)
that  $\gamma\in S_{\pi/2}$, which contradicts
$\alpha\pm\gamma\notin S_{\pi/2}$ from III)iv).
Given B2, it follows from II)iii) that $\gamma\in S_0$, again contradicting III)iv).

In analogue fashion one finds a contradiction for II)iv) and III)iii). Thus, there solely
remains II)iii) and III)iii) that imply $\gamma\in S_{\pi/2}$
for B1 and $\gamma\in S_0$ for B2.

In the following we look separately for B1 and B2 at states 25 and 26 from VI).
\begin{itemize}
\item{For B1, we have $\alpha,\beta\in S_0$,
$\gamma\in S_{\pi/2}$ and $\nu_1,\nu_2\in S_0$. Remarkably,
$y$-rotations ($\nu_1,\nu_2$) are restricted such that they
may change the sign of the $x$- and $z$-component or do nothing.
Thus, it suffices to look at the reduced Bloch
vectors after $U_c$. We find
\begin{align}
\text{state 25:}\notag\\
m_1&=\frac{1}{2}\left(0,\sin\gamma\cos\beta\right), m_2=\frac{1}{2}\left(0,\sin\gamma\cos\alpha\right)\notag\\
\text{state 26:}\notag\\
m'_1&=\frac{1}{2}\left(0,-\sin\gamma\cos\beta\right)=-m_1, m'_2=\frac{1}{2}\left(0,\sin\gamma\cos\alpha\right)=m_2.\notag
\end{align}
As first reduced Bloch vectors are oppositely directed and second
reduced Bloch vectors aligned it is always possible to replace the
$y$-rotations (that can change the sign of the $x$-component) by
some rotation around the $z$-axis. Thus, final local rotations
can be seen as one $z$-rotation for each qubit. Due to the orientation
of reduced Bloch vectors $F=1$ is impossible for both states but
PSF is well defined.}
\item{For B2, $\alpha,\beta\in S_{\pi/2}$, $\gamma\in S_0$ and $\nu_1,\nu_2\in S_0$ we find
\begin{align}
\text{state 25:}\notag\\
m_1&=\frac{1}{2}\left(\cos\gamma\sin\alpha,0\right), m_2=\frac{1}{2}\left(\cos\gamma\sin\beta,0\right)\notag\\
\text{state 26:}\notag\\
m'_1&=\frac{1}{2}\left(\cos\gamma\sin\beta,0\right)=m_1, m'_2=\frac{1}{2}\left(-\cos\gamma\sin\beta,0\right)=-m_2,\notag
\end{align}
and, analogously to B1, $F=1$ is impossible for both states but PSF is well defined.
}
\end{itemize}
By this it follows that I)iv) leads to a contradiction, too.
As all cases of I) lead to a contradiction this contradicts the
assumption. Perfect QPS by unitary transformations is impossible.
\end{proof}

\section{Proof of theorem 2}\label{app_theo2}
As well as for the proof of theorem 1, we define the range of initial angles $\theta_i,\varphi_i\in\Omega$ by $0\leq\theta_i\leq\pi,0\leq\varphi_i\le2\pi$ with $i\in\{1,2\}$ and denote the set of angles parametrizing $U$ by $\sigma\equiv(\alpha,\beta,\gamma,\mu_i,\nu_i,\sigma_1)\in\mathbb{R}^8$. We say that a subset $\Omega_0\subset\Omega$ has measure zero if it can be parametrized by less than four parameters.
The PSF is undefined if and only if $\{||\bm_1(\theta_i,\varphi_i,\sigma)||=0\lor||\bm_2(\theta_i,\varphi_i,\sigma)||=0\}$.
\begin{lemma}\label{n1x_lemma}
For any $\sigma$ with $||\bm_1(\theta_i,\varphi_i,\sigma)||=0~\forall\theta_i,\varphi_i\in\Omega_{0,1}\Rightarrow \mu(\Omega_{0,1})=0$.
\end{lemma}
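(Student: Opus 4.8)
The statement asserts that if, for a fixed unitary $U(\sigma)$, the first reduced Bloch vector's $xy$-projection $\bm_1$ vanishes identically on a subset $\Omega_{0,1}\subset\Omega$ of initial angles, then that subset has measure zero (i.e.\ is parametrizable by fewer than four parameters). My plan is to treat the condition $\|\bm_1\|=0$ as two real-analytic equations, $n_{1,x}(\theta_1,\varphi_1,\theta_2,\varphi_2)=0$ and $n_{1,y}(\theta_1,\varphi_1,\theta_2,\varphi_2)=0$, on the four-dimensional parameter domain $\Omega$, and to argue that the common zero set cannot be four-dimensional unless both functions vanish identically on $\Omega$ — which is then ruled out separately. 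By Lemma \ref{lem:1} the final local unitaries $W_1\otimes W_2$ act as fixed rotations $\tilde R$ on the reduced Bloch vectors, so $\bm_1$ after the full $U$ is obtained from $\bn_1$ after $U_c$ by a fixed $SO(3)$ rotation followed by projection onto the $xy$-plane; hence $\|\bm_1\|=0$ is equivalent to $\bn_1^{(U_c)}$ lying on a fixed line through the origin (the preimage of the $z$-axis under that rotation). It therefore suffices to analyze the explicit components of $\bn_1$ after $U_c$ given in the appendix.

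The key steps, in order: (i) Reduce to $U_c$ via Lemma \ref{lem:1}, so the condition becomes: the three components of $\bn_1^{(U_c)}(\theta_i,\varphi_i)$ are proportional to a fixed vector $\hat{\bf w}$ for all $(\theta_i,\varphi_i)\in\Omega_{0,1}$; equivalently two independent bilinear-trigonometric equations in the angles hold. (ii) Observe that each component of $\bn_1^{(U_c)}$ is a trigonometric polynomial in $\theta_1,\theta_2,\varphi_1,\varphi_2$ (a finite sum of products of $\sin,\cos$ of the angles with coefficients depending on $\alpha,\beta,\gamma$), hence real-analytic on $\Omega$. (iii) Invoke the standard fact that the zero set of a real-analytic function on a connected open subset of $\mathbb R^4$ is either the whole set or a closed set of Lebesgue measure zero (indeed of Hausdorff dimension $\le 3$, so parametrizable by fewer than four parameters); apply this to, say, $n_{1,x}$ (or whichever component is not identically zero). (iv) Handle the degenerate possibility that \emph{all} relevant components vanish identically on $\Omega$: substitute the specific simple initial states used in the appendix (e.g.\ $\theta_1=\theta_2=\pi/2$ with varying $\varphi_1,\varphi_2$, and $\theta_1=0$ or $\pi$) into the explicit formulas for $\bn_1$ after $U_c$ and show that for every $\sigma$ there exist angles making $\|\bm_1\|\ne 0$ — this follows because, e.g., the combination $\cos(\alpha+\gamma)$ and $\cos(\beta+\gamma)$ cannot both vanish together with the required $z$-components for all choices, so $\bn_1$ is genuinely angle-dependent and not pinned to a line. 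Concluding, $\Omega_{0,1}$ is contained in the zero set of a non-trivial analytic function, hence has measure zero; equivalently it is cut out by at least one nontrivial relation among the four angles, leaving at most three free parameters.

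The main obstacle is step (iv): ruling out the possibility that the two equations $n_{1,x}=n_{1,y}=0$ hold \emph{identically} on all of $\Omega$ for some exceptional $\sigma$, which would make $\Omega_{0,1}=\Omega$ and break the claim. This requires showing that for no choice of $\alpha,\beta,\gamma$ (and the fixed final rotation) is the $xy$-projection of the first reduced Bloch vector identically zero. I expect this to be dispatched by plugging in a small finite menu of initial product states — precisely the ones already tabulated in the appendix for the proof of Theorem \ref{th:impossible}, such as equatorial states with $\varphi_1=0,\varphi_2=\pi/2$ and the pole states $\theta_1\in\{0,\pi\}$ — and checking that the resulting Bloch-vector expressions (e.g.\ entries like $(0,\cos(\beta\pm\gamma),0)$ and $(\cos(\alpha\pm\gamma),0,0)$) cannot all have vanishing $xy$-part simultaneously for any $\sigma$. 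Once identical vanishing is excluded, the analyticity argument of step (iii) closes the lemma, and the analogous statement for $\bm_2$ follows by the symmetric argument (exchanging the roles of the two qubits), which is what the subsequent proof of Theorem \ref{th:zeroset} will need.
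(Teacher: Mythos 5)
Your overall strategy is sound and genuinely different in technique from the paper's. The paper also reduces to one scalar condition (it uses only the single component $m_{1,x}$ of $\bm_1$ after the full transformation), but then expands it explicitly as $\sum_{j=1}^{12}c_j(\sigma)t_j(\theta_i,\varphi_i)$, shows the coefficients $c_j$ cannot all vanish (because $(a,b,c)$ is essentially a row of the rotation matrix of $W_1$), and establishes that a non-trivial combination of the $t_j$ cannot vanish on a positive-measure set by a fourth-derivative manipulation that isolates the individual $t_j$. Your replacement of that last part by the standard fact that the zero set of a not-identically-vanishing real-analytic function on $\Omega\subset\mathbb{R}^4$ has measure zero is cleaner and fully adequate (arguably more rigorous than the paper's ``subintervals up to a zero set'' argument), and your reduction via Lemma \ref{lem:1} to the condition ``$\bn_1$ after $U_c$ lies on a fixed line $L$'' is correct.

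The weak point is exactly the step you flag as the main obstacle, (iv), and as stated it does not close. The finite menu you name (the pole states and the equatorial states with $\varphi_1\in\{0,\pi\},\varphi_2=\pi/2$ and $\varphi_1=\pi/2,\varphi_2\in\{0,\pi\}$) does not separate all degenerate $\sigma$: take $\alpha,\beta\in S_0$, $\gamma\in S_{\pi/2}$ and $L$ equal to the $z$-axis (i.e.\ trivial final rotation). Then those states give $\bn_1=(0,0,\pm1)$ or $\bn_1=0$, all consistent with confinement to $L$, even though the claim is still true there, since e.g.\ $m_{1,x}=-\cos\theta_1\sin\theta_2\sin\varphi_2$ is not identically zero; witnessing this requires states outside your named list (e.g.\ $\theta_1\in\{0,\pi\}$, $\theta_2=\pi/2$, $\sin\varphi_2\neq0$, which are among states 7--22 of the appendix). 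So to complete your proof you must either carry out the case analysis over the fuller tabulated set of states (which works, but is essentially a miniature of the case analysis in the proof of Theorem \ref{th:impossible}), or argue as the paper does: write the relevant component as a linear combination of the twelve linearly independent trigonometric monomials $t_j$ and observe that the coefficients $c_j(\sigma)$ cannot all vanish for any $\sigma$, which excludes identical vanishing uniformly in $\sigma$ without any enumeration of special states. With that step supplied, your analyticity argument finishes the lemma, and the companion statement for $\bm_2$ follows by the same substitution the paper uses in Lemma \ref{n2y_lemma}.
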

\begin{proof}
\begin{align}
||\bm_1||=0&\Rightarrow m_{1,x}(\theta_i,\varphi_i,\sigma)=0\notag\\
&\Leftrightarrow \sum_{j=1}^{12}c_j(\sigma)t_j(\theta_i,\varphi_i)=0,\label{n1x=0}
\end{align}
with simple trigonometric expressions $t_j(\theta_i,\varphi_i)$ and coefficients $c_j(\sigma)$ given by\\
\begin{minipage}[t]{0.4\textwidth}
\begin{align*}
t_1&=\sin\theta_1\sin\varphi_1\\
t_2&=\cos\theta_1\sin\theta_2\sin\varphi_2\\
t_3&=\sin\theta_1\cos\theta_2\cos\varphi_1\\
t_4&=\sin\theta_2\cos\varphi_2\\
t_5&=\sin\theta_1\cos\varphi_1\\
t_6&=\cos\theta_1\sin\theta_2\cos\varphi_2\\
t_7&=\sin\theta_1\cos\theta_2\sin\varphi_1\\
t_8&=\sin\theta_2\sin\varphi_2\\
t_9&=\cos\theta_1\\
t_{10}&=\sin\theta_1\sin\theta_2\cos\varphi_1\cos\varphi_2\\
t_{11}&=\sin\theta_1\sin\theta_2\sin\varphi_1\sin\varphi_2\\
t_{12}&=\cos\theta_2
\end{align*}
\end{minipage}
\begin{minipage}[t]{0.4\textwidth}
\begin{align*}
c_1&=-a\sin\gamma\sin\alpha\\
c_2&=-a\sin\gamma\cos\alpha\\
c_3&=a\cos\gamma\sin\alpha\\
c_4&=a\cos\gamma\cos\alpha\\
c_5&=b\sin\gamma\sin\beta\\
c_6&=-b\sin\gamma\cos\beta\\
c_7&=b\cos\gamma\sin\beta\\
c_8&=-b\cos\gamma\cos\beta\\
c_9&=-c\sin\alpha\sin\beta\\
c_{10}&=-c\sin\alpha\cos\beta\\
c_{11}&=c\cos\alpha\sin\beta\\
c_{12}&=c\cos\alpha\cos\beta,
\end{align*}
\end{minipage}\\

where the the angles $\mu_1,\nu_1,\sigma_1$, parametrizing the local rotations after $U_c$,  are part of coefficients $a,b,c$ defined by
\begin{align*}
a&\equiv\cos\mu_1\cos\nu_1\cos\sigma_1-\sin\mu_1\sin\sigma_1\\
b&\equiv\sin\mu_1\cos\nu_1\cos\sigma_1+\cos\mu_1\sin\sigma_1\\
c&\equiv\cos\sigma_1\sin\nu_1.
\end{align*}
To prove the lemma we show the opposite direction negated,
\begin{equation*}
\mu(\Omega_{0,1})\neq 0\Rightarrow\nexists~\sigma\text{ with }\sum_{j=1}^{12}c_j(\sigma)t_j(\theta_i,\varphi_i)=0~\forall\theta_i,\varphi_i\in\Omega_{0,1}.
\end{equation*}
The trivial solution is impossible, i.e. $\nexists\sigma$ such that $c_j=0~\forall j$:
\begin{align*}
\text{Suppose that }c_j=0~\forall j\\
&c_j=0, j=1,2,3,4\Rightarrow a=0\\
&c_j=0, j=5,6,7,8\Rightarrow b=0\\
&c_j=0, j=9,10,11,12\Rightarrow c=0.\\
\text{However, }c=0\Rightarrow\\
\text{i) }&\cos\sigma_1=0\Rightarrow a=\mp\sin\mu_1, b=\pm\cos\mu_1\Rightarrow \nexists \mu_1\text{ such that }a=b=0\\
\text{ii) }&\sin\nu_1=0\Rightarrow a=\pm\cos\mu_1\cos\sigma_1-\sin\mu_1\sin\sigma_1 \Rightarrow\\
&1.)~0=\cos\mu_1=\sin\sigma_1\Rightarrow b\neq 0\\
&2.)~0=\sin\mu_1=\cos\sigma_1\Rightarrow b\neq 0\\
&3.)~a=0\Rightarrow 1=\tan\mu_1\tan\sigma_1\text{ and }b=0\Rightarrow -\tan\mu_1=\tan\sigma_1\\
&\quad\Rightarrow 1=-\tan^2\mu_1\text{ which is impossible.}
\end{align*}
As $\mu(\Omega_{0,1})\neq 0$, $\Omega_{0,1}$ provides a range for each parameter that generally consists of arbitrary unions of subintervals (with respect to the intervals of $\Omega$) up to an arbitrary zero set on $\mathbb{R}$. This means that within these subintervals (i.~e. for almost all points of $\Omega_{0,1}$) derivatives by the parameters $\theta_i,\varphi_i$ are well defined.\\
Using the fact that $\sin(x)$ and $\cos(x)$ are reproduced by taking four times the derivation by $x$ we find the following set of equations
\begin{align}
\sum_j c_jt_j&=0 \tag{I}\\
\frac{d^4}{d\theta_1^4}\sum_j c_jt_j&=0 \Rightarrow t_4,t_8,t_{12}\rightarrow 0 \tag{II}\\
\frac{d^4}{d\theta_2^4}\sum_j c_jt_j&=0 \Rightarrow t_1,t_5,t_9\rightarrow 0\tag{III}\\
\frac{d^4}{d\varphi_1^4}\sum_j c_jt_j&=0 \Rightarrow t_2,t_4,t_6,t_8,t_9,t_{12}\rightarrow 0\tag{IV}\\
\frac{d^4}{d\varphi_2^4}\sum_j c_jt_j&=0 \Rightarrow t_1,t_3,t_5,t_7,t_9,t_{12}\rightarrow 0.\tag{V}
\end{align}
Note that a concatenation of the derivations in equations II-V simply combines the corresponding conditions for the $t_j$. In the following we use the notation "(I,II,V)" referring to the concatenation of the derivations in I,II, and V, and we write "I$\pm$II" referring to addition/subtraction of the corresponding equations for the $c_j,t_j$,
\begin{align}
\text{IV$-$(III,IV): } &c_1t_1+c_5t_5=0\label{lem1_eq1}\\
\text{(II,V)$-$(IV,V): } &c_2t_2+c_6t_6=0\label{lem1_eq2}\\
\text{(III,IV)$-$(IV,V): } &c_3t_3+c_7t_7=0\label{lem1_eq3}\\
\text{V$-$(II,V): } &c_4t_4+c_8t_8=0\label{lem1_eq4}\\
\text{(IV,V): } &c_{10}t_{10}+c_{11}t_{11}=0\label{lem1_eq5}\\
\text{I$-$III$-$IV$+$(IV,V): } &c_9t_9=0\label{lem1_eq6}\\
\text{I$-$II$-$V$+$(II,V): } &c_{12}t_{12}=0.\label{lem1_eq7}
\end{align}
We showed that the trivial solution $c_j=0~\forall j$ is impossible. Thus, it can be easily shown that at least one of the equations (\ref{lem1_eq1}-\ref{lem1_eq7}) is a non-trivial equation for the $t_j(\theta_i,\varphi_i)$ that is not true $\forall\theta_i,\varphi_i\in\Omega_{0,1}$.\\
As we showed that for $\mu(\Omega_{0,1})\neq 0$ there do not exist coefficients $c_j(\sigma)$ that satisfy $||\bm_1||=0~\forall\theta_i\varphi_i\in\Omega_{0,1}$, it follows that $||\bm_1||=0 ~\forall\theta_i,\varphi_i\in\Omega_{0,1}\Rightarrow\mu(\Omega_{0,1})=0$, which proves the lemma.
\end{proof}
\begin{lemma}\label{n2y_lemma}
For any $\sigma$ with $||\bm_2(\theta_i,\varphi_i,\sigma)||=0~\forall\theta_i,\varphi_i\in\Omega_{0,2}\Rightarrow \mu(\Omega_{0,2})=0$.
\end{lemma}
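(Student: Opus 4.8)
The plan is to reduce the statement to Lemma~\ref{n1x_lemma} by exploiting the qubit-exchange symmetry of the circuit, rather than to repeat the whole computation. The key observation is that the entangling part $U_c$ of Eq.~(\ref{eq:Uc}) satisfies, at the level of reduced Bloch vectors,
\begin{equation*}
\bn_2\bigl(\theta_1,\varphi_1,\theta_2,\varphi_2;\alpha,\beta,\gamma\bigr)=\bn_1\bigl(\theta_2,\varphi_2,\theta_1,\varphi_1;\beta,\alpha,\gamma\bigr),
\end{equation*}
i.e.\ swapping the two initial qubits and $\alpha\leftrightarrow\beta$ (with $\gamma$ fixed) turns the second reduced Bloch vector into the first; this is the same symmetry already used in Appendix~\ref{app_theo1}, e.g.\ when mapping state~3 onto state~5. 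First I would verify this identity by a component-by-component inspection of the explicit formulas for $\bn_1$ and $\bn_2$ stated in Appendix~\ref{app_theo1} — it is a one-line check for each of the three components.

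Next I would propagate this through the final local rotations using Lemma~\ref{lem:1}. In the parametrization $U_g$, qubit~2 receives $R_y(\nu_2)R_z(\mu_2)$ after $U_c$, while qubit~1 receives $R_z(\sigma_1)R_y(\nu_1)R_z(\mu_1)$. Given $\sigma$, define $\tilde\sigma$ by $\alpha\leftrightarrow\beta$, $\tilde\mu_1=\mu_2$, $\tilde\nu_1=\nu_2$, $\tilde\sigma_1=0$ (the remaining entries being irrelevant). Since $R_z(0)R_y(\nu_2)R_z(\mu_2)=R_y(\nu_2)R_z(\mu_2)$, Lemma~\ref{lem:1} together with the symmetry above gives $\bm_2(\theta_1,\varphi_1,\theta_2,\varphi_2;\sigma)=\bm_1(\theta_2,\varphi_2,\theta_1,\varphi_1;\tilde\sigma)$ as two-component vectors, hence $||\bm_2(\theta_i,\varphi_i,\sigma)||=||\bm_1(\theta_i,\varphi_i,\tilde\sigma)||$ after the measure-preserving coordinate swap $\pi:(\theta_1,\varphi_1)\leftrightarrow(\theta_2,\varphi_2)$. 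Therefore, if $||\bm_2(\theta_i,\varphi_i,\sigma)||=0$ for all $(\theta_i,\varphi_i)\in\Omega_{0,2}$, then $||\bm_1(\cdot,\tilde\sigma)||=0$ on $\pi(\Omega_{0,2})$; Lemma~\ref{n1x_lemma} applied to $\tilde\sigma$ yields $\mu(\pi(\Omega_{0,2}))=0$, so $\mu(\Omega_{0,2})=0$, which is the claim.

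Alternatively, and in closer parallel to the proof of Lemma~\ref{n1x_lemma}, one may redo the argument directly: expand $m_{2,y}(\theta_i,\varphi_i,\sigma)$ as $\sum_{j=1}^{12}c_j'(\sigma)t_j'(\theta_i,\varphi_i)$ over the analogous twelve elementary trigonometric monomials, with $c_j'$ built from $\alpha,\beta,\gamma$ and $\mu_2,\nu_2$ in the roles that $\mu_1,\nu_1,\sigma_1$ play for qubit~1; show that no $\sigma$ annihilates all $c_j'$; assume $\mu(\Omega_{0,2})\neq0$ so that $\theta_i,\varphi_i$-derivatives exist on a product of subintervals; take fourth derivatives in $\theta_1,\theta_2,\varphi_1,\varphi_2$ and form the same linear combinations to isolate two- and one-term identities among the $t_j'$; then conclude that at least one of them is a nontrivial identity that cannot hold on a positive-measure set. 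The main obstacle in either route is bookkeeping rather than new ideas: in the symmetry route it is keeping track of how the retained relative-phase $z$-rotation is relabelled under the swap, so that the hypothesis of Lemma~\ref{n1x_lemma} genuinely applies to $\tilde\sigma$; in the direct route it is re-running the case analysis that excludes a simultaneous vanishing of all coefficients $c_j'$. I expect the symmetry route to be the shorter one to write cleanly, and would use it.
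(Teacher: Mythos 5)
Your proposal is correct; your fallback (direct) route is essentially the paper's own proof, while your preferred symmetry route is a genuinely different, cleaner reduction. The paper does not invoke the qubit-exchange symmetry as a black box: it expands the necessary condition $m_{2,y}=0$ in the \emph{same} monomials $t_j$ as in Lemma~\ref{n1x_lemma}, observes that the coefficients are a signed permutation of the $c_j$ with $a=\cos\mu_2$, $b=\sin\mu_2$, $c=0$ (only eight terms survive, since $m_{2,y}$ is unaffected by $R_y(\nu_2)$ and $\nu_2$ drops out entirely --- your phrase ``$\mu_2,\nu_2$ in the roles that $\mu_1,\nu_1,\sigma_1$ play'' slightly overstates the dependence, but you would see this upon expansion, and it only makes the no-trivial-solution step easier because $a^2+b^2=1$), and then declares the fourth-derivative argument ``analogue''. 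Your route instead verifies the exchange identity $\bn_2(\theta_1,\varphi_1,\theta_2,\varphi_2;\alpha,\beta,\gamma)=\bn_1(\theta_2,\varphi_2,\theta_1,\varphi_1;\beta,\alpha,\gamma)$ for the post-$U_c$ Bloch vectors (which indeed holds component by component for the displayed formulas), absorbs qubit~2's final rotation into the qubit-1 form via $\tilde\sigma_1=0$, $\tilde\mu_1=\mu_2$, $\tilde\nu_1=\nu_2$, and applies Lemma~\ref{n1x_lemma} verbatim to $\tilde\sigma$ on the swapped set $\pi(\Omega_{0,2})$; since the coordinate swap is a measure-preserving relabelling (it preserves the paper's notion of ``parametrized by fewer than four parameters''), $\mu(\Omega_{0,2})=0$ follows. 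What your route buys is that none of the Lemma~\ref{n1x_lemma} machinery (coefficient analysis, derivative identities) has to be re-run; what the paper's route buys is that no circuit symmetry needs to be established, the substitution table being read off directly from the expansion. Your handling of the only real bookkeeping hazard --- the retained $z$-rotation $\sigma_1$ present on qubit~1 but not on qubit~2 --- is correct.
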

\begin{proof}
\begin{align}
||\bm_2||=0&\Rightarrow m_{2,y}(\theta_i,\varphi_i,\sigma)=0\notag\\
&\Leftrightarrow \sum_{j=1}^{8}d_j(\sigma)s_j(\theta_i,\varphi_i)=0.\label{n2y=0}
\end{align}
The proof goes analogue to lemma \ref{n1x_lemma} as the following substitutions are true,
\begin{align*}
d_1&\equiv c_4,d_2\equiv-c_3, d_3\equiv-c_2, d_4\equiv c_1\\
d_5&\equiv-c_8, d_6\equiv c_7,d_7\equiv c_6,d_8\equiv-c_5\\
s_j&=t_j,
\end{align*}
with
\begin{align*}
a&\equiv\cos\mu_2\\
b&\equiv\sin\mu_2\\
c&\equiv 0.
\end{align*}
\end{proof}

Proving \textbf{theorem 2} means to show that for any $\sigma$ with $\{||\bm_1(\theta_i,\varphi_i,\sigma)||=0\lor||\bm_2(\theta_i,\varphi_i,\sigma)||=0\}~\forall\theta_i,\varphi_i\in\Omega_0\Rightarrow\mu(\Omega_0)=0$.
\begin{proof}
According to the lemma \ref{n1x_lemma} $||\bm_1||=0$ is only true for a set of of initial angles $\Omega_{0,1}$ with measure zero, and according to lemma \ref{n2y_lemma} $||\bm_2||=0$ is only true for a set of of initial angles $\Omega_{0,2}$ with measure zero. As $\{||\bm_1||=0\lor||\bm_2||=0\}$ implies that $\Omega_{0}\subset\Omega_{0,1}\cup\Omega_{0,2}$, and as unions of zero sets remain zero sets, it follows that $\Omega_{0}$ has measure zero.\\
\end{proof}
\section{Gradients}\label{app_gradients}
One may verify analytically that at the position of the maxima and
minima the gradient
of the average PSF is zero for any transformation
$U_g$ evaluated at an
extremal set $\Sigma_{\rm ext}$ for the parameters according to
conditions
(14) in the main text (with $\sigma_1=\mu_2=\nu_2=0$),
\begin{align}
\boldsymbol{\nabla}\braket{F(\rho_{1,p}\otimes\rho_{2,p},U_{c,z}(\Sigma))}|_{\Sigma=\Sigma_\text{ext}} &=\braket{\boldsymbol{\nabla}F(\rho_{1,p}\otimes\rho_{2,p},U_{c,z}(\Sigma))(\varphi_1,\varphi_2,\theta_1,\theta_2)}|_{\Sigma=\Sigma_\text{ext}}\\
&=\braket{\boldsymbol{\nabla}F(\rho_{1,p}\otimes\rho_{2,p},U_{c,z}(\Sigma))(\varphi'_1,\varphi'_2,\theta'_1,\theta'_2)\label{p32}}|_{\Sigma=\Sigma_\text{ext}}\\
&=-\braket{\boldsymbol{\nabla}F(\rho_{1,p}\otimes\rho_{2,p},U_{c,z}(\Sigma))(\varphi_1,\varphi_2,\theta_1,\theta_2)}|_{\Sigma=\Sigma_\text{ext}} \label{p33}\\
\Rightarrow\boldsymbol{\nabla}\braket{F(\rho_{1,p}\otimes\rho_{2,p},U_{c,z}(\Sigma))}|_{\Sigma=\Sigma_\text{ext}} &=\boldsymbol{0}\,,
\end{align}
where
$\boldsymbol{\nabla}=(\partial/\partial\alpha,\partial/\partial\beta,\partial/\partial\gamma,\partial/\partial\mu_1,\partial/\partial\nu_1,\partial/\partial\sigma_1,\partial/\partial\mu_2,\partial/\partial\nu_2$).
In line (\ref{p32}) the symmetry transformation
\begin{align}
\theta_1 &\rightarrow\theta'_1=\pi-\theta_1\\
\theta_2 &\rightarrow\theta'_2=\pi-\theta_2
\end{align}
was applied for $\partial/\partial\alpha,\partial/\partial\beta,\partial/\partial\gamma,\partial/\partial\mu_1,\partial/\partial\sigma_1,\partial/\partial\mu_2$-components while the symmetry transformation
\begin{align}
\varphi_1 &\rightarrow\varphi'_1=\pi+\varphi_1\\
\varphi_2 &\rightarrow\varphi'_2=\pi+\varphi_2
\end{align}
was applied for $\partial/\partial\nu_1$- and $\partial/\partial\nu_2$-components.
Calculation gives line (\ref{p33}). To analytically
verify that we have a local maximum (minimum), the Hessian matrix
of second derivatives needs to be shown negative definite (positive
definite), which could not be achieved analytically.\\

\bibliography{bib_phasesynchro}

\begin{thebibliography}{42}%
\makeatletter
\providecommand \@ifxundefined [1]{%
 \@ifx{#1\undefined}
}%
\providecommand \@ifnum [1]{%
 \ifnum #1\expandafter \@firstoftwo
 \else \expandafter \@secondoftwo
 \fi
}%
\providecommand \@ifx [1]{%
 \ifx #1\expandafter \@firstoftwo
 \else \expandafter \@secondoftwo
 \fi
}%
\providecommand \natexlab [1]{#1}%
\providecommand \enquote  [1]{``#1''}%
\providecommand \bibnamefont  [1]{#1}%
\providecommand \bibfnamefont [1]{#1}%
\providecommand \citenamefont [1]{#1}%
\providecommand \href@noop [0]{\@secondoftwo}%
\providecommand \href [0]{\begingroup \@sanitize@url \@href}%
\providecommand \@href[1]{\@@startlink{#1}\@@href}%
\providecommand \@@href[1]{\endgroup#1\@@endlink}%
\providecommand \@sanitize@url [0]{\catcode `\\12\catcode `\$12\catcode
  `\&12\catcode `\#12\catcode `\^12\catcode `\_12\catcode `\%12\relax}%
\providecommand \@@startlink[1]{}%
\providecommand \@@endlink[0]{}%
\providecommand \url  [0]{\begingroup\@sanitize@url \@url }%
\providecommand \@url [1]{\endgroup\@href {#1}{\urlprefix }}%
\providecommand \urlprefix  [0]{URL }%
\providecommand \Eprint [0]{\href }%
\providecommand \doibase [0]{http://dx.doi.org/}%
\providecommand \selectlanguage [0]{\@gobble}%
\providecommand \bibinfo  [0]{\@secondoftwo}%
\providecommand \bibfield  [0]{\@secondoftwo}%
\providecommand \translation [1]{[#1]}%
\providecommand \BibitemOpen [0]{}%
\providecommand \bibitemStop [0]{}%
\providecommand \bibitemNoStop [0]{.\EOS\space}%
\providecommand \EOS [0]{\spacefactor3000\relax}%
\providecommand \BibitemShut  [1]{\csname bibitem#1\endcsname}%
\let\auto@bib@innerbib\@empty
\bibitem [{\citenamefont {Buckel}\ and\ \citenamefont
  {Kleiner}(2012)}]{buckel2012supraleitung}%
  \BibitemOpen
  \bibfield  {author} {\bibinfo {author} {\bibfnamefont {W.}~\bibnamefont
  {Buckel}}\ and\ \bibinfo {author} {\bibfnamefont {R.}~\bibnamefont
  {Kleiner}},\ }\href@noop {} {\emph {\bibinfo {title} {Supraleitung:
  Grundlagen und Anwendungen}}}\ (\bibinfo  {publisher} {John Wiley \& Sons},\
  \bibinfo {address} {Hoboken, New Jersey},\ \bibinfo {year}
  {2012})\BibitemShut {NoStop}%
\bibitem [{\citenamefont {Pitaevskii}\ and\ \citenamefont
  {Stringari}(2016)}]{pitaevskii2016bose}%
  \BibitemOpen
  \bibfield  {author} {\bibinfo {author} {\bibfnamefont {L.}~\bibnamefont
  {Pitaevskii}}\ and\ \bibinfo {author} {\bibfnamefont {S.}~\bibnamefont
  {Stringari}},\ }\href@noop {} {\emph {\bibinfo {title} {Bose-Einstein
  Condensation and Superfluidity}}},\ Vol.\ \bibinfo {volume} {164}\ (\bibinfo
  {publisher} {Oxford University Press},\ \bibinfo {address} {Oxford,
  England},\ \bibinfo {year} {2016})\BibitemShut {NoStop}%
\bibitem [{\citenamefont {Auerbach}(2012)}]{auerbach2012interacting}%
  \BibitemOpen
  \bibfield  {author} {\bibinfo {author} {\bibfnamefont {A.}~\bibnamefont
  {Auerbach}},\ }\href@noop {} {\emph {\bibinfo {title} {Interacting electrons
  and quantum magnetism}}}\ (\bibinfo  {publisher} {Springer Science \&
  Business Media},\ \bibinfo {address} {Berlin},\ \bibinfo {year}
  {2012})\BibitemShut {NoStop}%
\bibitem [{\citenamefont {Henry}(1982)}]{henry1982theory}%
  \BibitemOpen
  \bibfield  {author} {\bibinfo {author} {\bibfnamefont {C.}~\bibnamefont
  {Henry}},\ }\href@noop {} {\bibfield  {journal} {\bibinfo  {journal} {IEEE
  Journal of Quantum Electronics}\ }\textbf {\bibinfo {volume} {18}},\ \bibinfo
  {pages} {259} (\bibinfo {year} {1982})}\BibitemShut {NoStop}%
\bibitem [{\citenamefont {Goychuk}\ \emph {et~al.}(2006)\citenamefont
  {Goychuk}, \citenamefont {Casado-Pascual}, \citenamefont {Morillo},
  \citenamefont {Lehmann},\ and\ \citenamefont
  {H\"anggi}}]{goychuk_quantum_2006}%
  \BibitemOpen
  \bibfield  {author} {\bibinfo {author} {\bibfnamefont {I.}~\bibnamefont
  {Goychuk}}, \bibinfo {author} {\bibfnamefont {J.}~\bibnamefont
  {Casado-Pascual}}, \bibinfo {author} {\bibfnamefont {M.}~\bibnamefont
  {Morillo}}, \bibinfo {author} {\bibfnamefont {J.}~\bibnamefont {Lehmann}}, \
  and\ \bibinfo {author} {\bibfnamefont {P.}~\bibnamefont {H\"anggi}},\
  }\href@noop {} {\bibfield  {journal} {\bibinfo  {journal} {Phys. Rev. Lett.}\
  }\textbf {\bibinfo {volume} {97}},\ \bibinfo {pages} {210601} (\bibinfo
  {year} {2006})}\BibitemShut {NoStop}%
\bibitem [{\citenamefont {Zhirov}\ and\ \citenamefont
  {Shepelyansky}(2008)}]{zhirov_synchronization_2008}%
  \BibitemOpen
  \bibfield  {author} {\bibinfo {author} {\bibfnamefont {O.~V.}\ \bibnamefont
  {Zhirov}}\ and\ \bibinfo {author} {\bibfnamefont {D.~L.}\ \bibnamefont
  {Shepelyansky}},\ }\href@noop {} {\bibfield  {journal} {\bibinfo  {journal}
  {Phys. Rev. Lett.}\ }\textbf {\bibinfo {volume} {100}},\ \bibinfo {pages}
  {014101} (\bibinfo {year} {2008})}\BibitemShut {NoStop}%
\bibitem [{\citenamefont {Shim}\ \emph {et~al.}(2007)\citenamefont {Shim},
  \citenamefont {Imboden},\ and\ \citenamefont
  {Mohanty}}]{shim_synchronized_2007}%
  \BibitemOpen
  \bibfield  {author} {\bibinfo {author} {\bibfnamefont {S.-B.}\ \bibnamefont
  {Shim}}, \bibinfo {author} {\bibfnamefont {M.}~\bibnamefont {Imboden}}, \
  and\ \bibinfo {author} {\bibfnamefont {P.}~\bibnamefont {Mohanty}},\
  }\href@noop {} {\bibfield  {journal} {\bibinfo  {journal} {Science}\ }\textbf
  {\bibinfo {volume} {316}},\ \bibinfo {pages} {95} (\bibinfo {year}
  {2007})}\BibitemShut {NoStop}%
\bibitem [{\citenamefont {Zhu}\ \emph {et~al.}(2015)\citenamefont {Zhu},
  \citenamefont {Schachenmayer}, \citenamefont {Xu}, \citenamefont {Herrera},
  \citenamefont {Restrepo}, \citenamefont {Holland},\ and\ \citenamefont
  {Rey}}]{zhu_synchronization_2015}%
  \BibitemOpen
  \bibfield  {author} {\bibinfo {author} {\bibfnamefont {B.}~\bibnamefont
  {Zhu}}, \bibinfo {author} {\bibfnamefont {J.}~\bibnamefont {Schachenmayer}},
  \bibinfo {author} {\bibfnamefont {M.}~\bibnamefont {Xu}}, \bibinfo {author}
  {\bibfnamefont {F.}~\bibnamefont {Herrera}}, \bibinfo {author} {\bibfnamefont
  {J.~G.}\ \bibnamefont {Restrepo}}, \bibinfo {author} {\bibfnamefont {M.~J.}\
  \bibnamefont {Holland}}, \ and\ \bibinfo {author} {\bibfnamefont {A.~M.}\
  \bibnamefont {Rey}},\ }\href@noop {} {\bibfield  {journal} {\bibinfo
  {journal} {New J. Phys.}\ }\textbf {\bibinfo {volume} {17}},\ \bibinfo
  {pages} {083063} (\bibinfo {year} {2015})}\BibitemShut {NoStop}%
\bibitem [{\citenamefont {Gisin}\ and\ \citenamefont {Massar}(1997)}]{Gisin97}%
  \BibitemOpen
  \bibfield  {author} {\bibinfo {author} {\bibfnamefont {N.}~\bibnamefont
  {Gisin}}\ and\ \bibinfo {author} {\bibfnamefont {S.}~\bibnamefont {Massar}},\
  }\href {\doibase 10.1103/PhysRevLett.79.2153} {\bibfield  {journal} {\bibinfo
   {journal} {Phys. Rev. Lett.}\ }\textbf {\bibinfo {volume} {79}},\ \bibinfo
  {pages} {2153} (\bibinfo {year} {1997})}\BibitemShut {NoStop}%
\bibitem [{\citenamefont {Bruss}\ \emph {et~al.}(1998)\citenamefont {Bruss},
  \citenamefont {Ekert},\ and\ \citenamefont {Macchiavello}}]{Bruss98}%
  \BibitemOpen
  \bibfield  {author} {\bibinfo {author} {\bibfnamefont {D.}~\bibnamefont
  {Bruss}}, \bibinfo {author} {\bibfnamefont {A.}~\bibnamefont {Ekert}}, \ and\
  \bibinfo {author} {\bibfnamefont {C.}~\bibnamefont {Macchiavello}},\ }\href
  {\doibase 10.1103/PhysRevLett.81.2598} {\bibfield  {journal} {\bibinfo
  {journal} {Phys. Rev. Lett.}\ }\textbf {\bibinfo {volume} {81}},\ \bibinfo
  {pages} {2598} (\bibinfo {year} {1998})}\BibitemShut {NoStop}%
\bibitem [{\citenamefont {Werner}(1998)}]{Werner98}%
  \BibitemOpen
  \bibfield  {author} {\bibinfo {author} {\bibfnamefont {R.~F.}\ \bibnamefont
  {Werner}},\ }\href {\doibase 10.1103/PhysRevA.58.1827} {\bibfield  {journal}
  {\bibinfo  {journal} {Phys. Rev. A}\ }\textbf {\bibinfo {volume} {58}},\
  \bibinfo {pages} {1827} (\bibinfo {year} {1998})}\BibitemShut {NoStop}%
\bibitem [{\citenamefont {Fan}\ \emph {et~al.}(2014)\citenamefont {Fan},
  \citenamefont {Wang}, \citenamefont {Jing}, \citenamefont {Yue},
  \citenamefont {Shi}, \citenamefont {Zhang},\ and\ \citenamefont
  {Mu}}]{fan_quantum_2014}%
  \BibitemOpen
  \bibfield  {author} {\bibinfo {author} {\bibfnamefont {H.}~\bibnamefont
  {Fan}}, \bibinfo {author} {\bibfnamefont {Y.-N.}\ \bibnamefont {Wang}},
  \bibinfo {author} {\bibfnamefont {L.}~\bibnamefont {Jing}}, \bibinfo {author}
  {\bibfnamefont {J.-D.}\ \bibnamefont {Yue}}, \bibinfo {author} {\bibfnamefont
  {H.-D.}\ \bibnamefont {Shi}}, \bibinfo {author} {\bibfnamefont {Y.-L.}\
  \bibnamefont {Zhang}}, \ and\ \bibinfo {author} {\bibfnamefont {L.-Z.}\
  \bibnamefont {Mu}},\ }\href@noop {} {\bibfield  {journal} {\bibinfo
  {journal} {Physics Reports}\ }\bibinfo {series} {Quantum cloning machines and
  the applications},\ \textbf {\bibinfo {volume} {544}},\ \bibinfo {pages}
  {241} (\bibinfo {year} {2014})}\BibitemShut {NoStop}%
\bibitem [{\citenamefont {De~Martini}(1998)}]{DeMartini98}%
  \BibitemOpen
  \bibfield  {author} {\bibinfo {author} {\bibfnamefont {F.}~\bibnamefont
  {De~Martini}},\ }\href@noop {} {\bibfield  {journal} {\bibinfo  {journal}
  {Phys. Rev. Lett.}\ }\textbf {\bibinfo {volume} {81}},\ \bibinfo {pages}
  {2842} (\bibinfo {year} {1998})}\BibitemShut {NoStop}%
\bibitem [{\citenamefont {Nagali}\ \emph {et~al.}(2007)\citenamefont {Nagali},
  \citenamefont {De~Angelis}, \citenamefont {Sciarrino},\ and\ \citenamefont
  {De~Martini}}]{Nagali07}%
  \BibitemOpen
  \bibfield  {author} {\bibinfo {author} {\bibfnamefont {E.}~\bibnamefont
  {Nagali}}, \bibinfo {author} {\bibfnamefont {T.}~\bibnamefont {De~Angelis}},
  \bibinfo {author} {\bibfnamefont {F.}~\bibnamefont {Sciarrino}}, \ and\
  \bibinfo {author} {\bibfnamefont {F.}~\bibnamefont {De~Martini}},\
  }\href@noop {} {\bibfield  {journal} {\bibinfo  {journal} {Phys. Rev. A}\
  }\textbf {\bibinfo {volume} {76}},\ \bibinfo {pages} {042126} (\bibinfo
  {year} {2007})}\BibitemShut {NoStop}%
\bibitem [{\citenamefont {Dieks}(1982)}]{dieks_communication_1982}%
  \BibitemOpen
  \bibfield  {author} {\bibinfo {author} {\bibfnamefont {D.}~\bibnamefont
  {Dieks}},\ }\href@noop {} {\bibfield  {journal} {\bibinfo  {journal} {Physics
  Letters A}\ }\textbf {\bibinfo {volume} {92}},\ \bibinfo {pages} {271}
  (\bibinfo {year} {1982})}\BibitemShut {NoStop}%
\bibitem [{\citenamefont {Wootters}\ and\ \citenamefont
  {Zurek}(1982)}]{wootters_single_1982}%
  \BibitemOpen
  \bibfield  {author} {\bibinfo {author} {\bibfnamefont {W.~K.}\ \bibnamefont
  {Wootters}}\ and\ \bibinfo {author} {\bibfnamefont {W.~H.}\ \bibnamefont
  {Zurek}},\ }\href@noop {} {\bibfield  {journal} {\bibinfo  {journal}
  {Nature}\ }\textbf {\bibinfo {volume} {299}},\ \bibinfo {pages} {802}
  (\bibinfo {year} {1982})}\BibitemShut {NoStop}%
\bibitem [{\citenamefont {Bru{\ss}}\ \emph {et~al.}(2000)\citenamefont
  {Bru{\ss}}, \citenamefont {Cinchetti}, \citenamefont {D'Ariano},\ and\
  \citenamefont {Macchiavello}}]{brus_phase-covariant_2000}%
  \BibitemOpen
  \bibfield  {author} {\bibinfo {author} {\bibfnamefont {D.}~\bibnamefont
  {Bru{\ss}}}, \bibinfo {author} {\bibfnamefont {M.}~\bibnamefont {Cinchetti}},
  \bibinfo {author} {\bibfnamefont {G.~M.}\ \bibnamefont {D'Ariano}}, \ and\
  \bibinfo {author} {\bibfnamefont {C.}~\bibnamefont {Macchiavello}},\
  }\href@noop {} {\bibfield  {journal} {\bibinfo  {journal} {Phys. Rev. A}\
  }\textbf {\bibinfo {volume} {62}},\ \bibinfo {pages} {012302} (\bibinfo
  {year} {2000})}\BibitemShut {NoStop}%
\bibitem [{\citenamefont {Fan}\ \emph {et~al.}(2001)\citenamefont {Fan},
  \citenamefont {Matsumoto}, \citenamefont {Wang},\ and\ \citenamefont
  {Wadati}}]{fan_quantum_2001}%
  \BibitemOpen
  \bibfield  {author} {\bibinfo {author} {\bibfnamefont {H.}~\bibnamefont
  {Fan}}, \bibinfo {author} {\bibfnamefont {K.}~\bibnamefont {Matsumoto}},
  \bibinfo {author} {\bibfnamefont {X.-B.}\ \bibnamefont {Wang}}, \ and\
  \bibinfo {author} {\bibfnamefont {M.}~\bibnamefont {Wadati}},\ }\href@noop {}
  {\bibfield  {journal} {\bibinfo  {journal} {Phys. Rev. A}\ }\textbf {\bibinfo
  {volume} {65}},\ \bibinfo {pages} {012304} (\bibinfo {year}
  {2001})}\BibitemShut {NoStop}%
\bibitem [{\citenamefont {D'Ariano}\ and\ \citenamefont
  {Macchiavello}(2003)}]{dariano_optimal_2003}%
  \BibitemOpen
  \bibfield  {author} {\bibinfo {author} {\bibfnamefont {G.~M.}\ \bibnamefont
  {D'Ariano}}\ and\ \bibinfo {author} {\bibfnamefont {C.}~\bibnamefont
  {Macchiavello}},\ }\href@noop {} {\bibfield  {journal} {\bibinfo  {journal}
  {Phys. Rev. A}\ }\textbf {\bibinfo {volume} {67}},\ \bibinfo {pages} {042306}
  (\bibinfo {year} {2003})}\BibitemShut {NoStop}%
\bibitem [{\citenamefont {Fiur{\'a}{\v s}ek}(2003)}]{fiurasek_optical_2003}%
  \BibitemOpen
  \bibfield  {author} {\bibinfo {author} {\bibfnamefont {J.}~\bibnamefont
  {Fiur{\'a}{\v s}ek}},\ }\href@noop {} {\bibfield  {journal} {\bibinfo
  {journal} {Phys. Rev. A}\ }\textbf {\bibinfo {volume} {67}},\ \bibinfo
  {pages} {052314} (\bibinfo {year} {2003})}\BibitemShut {NoStop}%
\bibitem [{\citenamefont {De~Chiara}\ \emph {et~al.}(2004)\citenamefont
  {De~Chiara}, \citenamefont {Fazio}, \citenamefont {Macchiavello},
  \citenamefont {Montangero},\ and\ \citenamefont
  {Palma}}]{de_chiara_quantum_2004}%
  \BibitemOpen
  \bibfield  {author} {\bibinfo {author} {\bibfnamefont {G.}~\bibnamefont
  {De~Chiara}}, \bibinfo {author} {\bibfnamefont {R.}~\bibnamefont {Fazio}},
  \bibinfo {author} {\bibfnamefont {C.}~\bibnamefont {Macchiavello}}, \bibinfo
  {author} {\bibfnamefont {S.}~\bibnamefont {Montangero}}, \ and\ \bibinfo
  {author} {\bibfnamefont {G.~M.}\ \bibnamefont {Palma}},\ }\href@noop {}
  {\bibfield  {journal} {\bibinfo  {journal} {Phys. Rev. A}\ }\textbf {\bibinfo
  {volume} {70}},\ \bibinfo {pages} {062308} (\bibinfo {year}
  {2004})}\BibitemShut {NoStop}%
\bibitem [{\citenamefont {Sciarrino}\ and\ \citenamefont
  {De~Martini}(2005)}]{sciarrino_realization_2005}%
  \BibitemOpen
  \bibfield  {author} {\bibinfo {author} {\bibfnamefont {F.}~\bibnamefont
  {Sciarrino}}\ and\ \bibinfo {author} {\bibfnamefont {F.}~\bibnamefont
  {De~Martini}},\ }\href@noop {} {\bibfield  {journal} {\bibinfo  {journal}
  {Phys. Rev. A}\ }\textbf {\bibinfo {volume} {72}},\ \bibinfo {pages} {062313}
  (\bibinfo {year} {2005})}\BibitemShut {NoStop}%
\bibitem [{\citenamefont {Buscemi}\ \emph {et~al.}(2005)\citenamefont
  {Buscemi}, \citenamefont {D'Ariano},\ and\ \citenamefont
  {Macchiavello}}]{buscemi_economical_2005}%
  \BibitemOpen
  \bibfield  {author} {\bibinfo {author} {\bibfnamefont {F.}~\bibnamefont
  {Buscemi}}, \bibinfo {author} {\bibfnamefont {G.~M.}\ \bibnamefont
  {D'Ariano}}, \ and\ \bibinfo {author} {\bibfnamefont {C.}~\bibnamefont
  {Macchiavello}},\ }\href@noop {} {\bibfield  {journal} {\bibinfo  {journal}
  {Phys. Rev. A}\ }\textbf {\bibinfo {volume} {71}},\ \bibinfo {pages} {042327}
  (\bibinfo {year} {2005})}\BibitemShut {NoStop}%
\bibitem [{\citenamefont {\v{C}ernoch}\ \emph {et~al.}(2006)\citenamefont
  {\v{C}ernoch}, \citenamefont {Bart\r{u}\v{s}kov{\'a}}, \citenamefont
  {Soubusta}, \citenamefont {Je\v{z}ek}, \citenamefont {Fiur{\'a}\v{s}ek},\
  and\ \citenamefont {Du\v{s}ek}}]{cernoch_experimental_2006}%
  \BibitemOpen
  \bibfield  {author} {\bibinfo {author} {\bibfnamefont {A.}~\bibnamefont
  {\v{C}ernoch}}, \bibinfo {author} {\bibfnamefont {L.}~\bibnamefont
  {Bart\r{u}\v{s}kov{\'a}}}, \bibinfo {author} {\bibfnamefont {J.}~\bibnamefont
  {Soubusta}}, \bibinfo {author} {\bibfnamefont {M.}~\bibnamefont {Je\v{z}ek}},
  \bibinfo {author} {\bibfnamefont {J.}~\bibnamefont {Fiur{\'a}\v{s}ek}}, \
  and\ \bibinfo {author} {\bibfnamefont {M.}~\bibnamefont {Du\v{s}ek}},\
  }\href@noop {} {\bibfield  {journal} {\bibinfo  {journal} {Phys. Rev. A}\
  }\textbf {\bibinfo {volume} {74}},\ \bibinfo {pages} {042327} (\bibinfo
  {year} {2006})}\BibitemShut {NoStop}%
\bibitem [{\citenamefont {Chen}\ \emph {et~al.}(2007)\citenamefont {Chen},
  \citenamefont {Zhou}, \citenamefont {Suter},\ and\ \citenamefont
  {Du}}]{Chen07}%
  \BibitemOpen
  \bibfield  {author} {\bibinfo {author} {\bibfnamefont {H.}~\bibnamefont
  {Chen}}, \bibinfo {author} {\bibfnamefont {X.}~\bibnamefont {Zhou}}, \bibinfo
  {author} {\bibfnamefont {D.}~\bibnamefont {Suter}}, \ and\ \bibinfo {author}
  {\bibfnamefont {J.}~\bibnamefont {Du}},\ }\href@noop {} {\bibfield  {journal}
  {\bibinfo  {journal} {Phys. Rev. A}\ }\textbf {\bibinfo {volume} {75}},\
  \bibinfo {pages} {012317} (\bibinfo {year} {2007})}\BibitemShut {NoStop}%
\bibitem [{\citenamefont {Yao}\ \emph {et~al.}(2014)\citenamefont {Yao},
  \citenamefont {Ge}, \citenamefont {Xiao}, \citenamefont {Wang},\ and\
  \citenamefont {Sun}}]{yao_multiple_2014}%
  \BibitemOpen
  \bibfield  {author} {\bibinfo {author} {\bibfnamefont {Y.}~\bibnamefont
  {Yao}}, \bibinfo {author} {\bibfnamefont {L.}~\bibnamefont {Ge}}, \bibinfo
  {author} {\bibfnamefont {X.}~\bibnamefont {Xiao}}, \bibinfo {author}
  {\bibfnamefont {X.-g.}\ \bibnamefont {Wang}}, \ and\ \bibinfo {author}
  {\bibfnamefont {C.-p.}\ \bibnamefont {Sun}},\ }\href@noop {} {\bibfield
  {journal} {\bibinfo  {journal} {Phys. Rev. A}\ }\textbf {\bibinfo {volume}
  {90}},\ \bibinfo {pages} {022327} (\bibinfo {year} {2014})}\BibitemShut
  {NoStop}%
\bibitem [{\citenamefont {D'Ariano}\ and\ \citenamefont
  {Lo~Presti}(2001)}]{DAriano01}%
  \BibitemOpen
  \bibfield  {author} {\bibinfo {author} {\bibfnamefont {G.~M.}\ \bibnamefont
  {D'Ariano}}\ and\ \bibinfo {author} {\bibfnamefont {P.}~\bibnamefont
  {Lo~Presti}},\ }\href@noop {} {\bibfield  {journal} {\bibinfo  {journal}
  {Phys. Rev. A}\ }\textbf {\bibinfo {volume} {64}},\ \bibinfo {pages} {042308}
  (\bibinfo {year} {2001})}\BibitemShut {NoStop}%
\bibitem [{\citenamefont {Karimipour}\ and\ \citenamefont
  {Rezakhani}(2002)}]{karimipour_generation_2002}%
  \BibitemOpen
  \bibfield  {author} {\bibinfo {author} {\bibfnamefont {V.}~\bibnamefont
  {Karimipour}}\ and\ \bibinfo {author} {\bibfnamefont {A.~T.}\ \bibnamefont
  {Rezakhani}},\ }\href@noop {} {\bibfield  {journal} {\bibinfo  {journal}
  {Phys. Rev. A}\ }\textbf {\bibinfo {volume} {66}},\ \bibinfo {pages} {052111}
  (\bibinfo {year} {2002})}\BibitemShut {NoStop}%
\bibitem [{\citenamefont {Du}\ \emph {et~al.}(2005)\citenamefont {Du},
  \citenamefont {Durt}, \citenamefont {Zou}, \citenamefont {Li}, \citenamefont
  {Kwek}, \citenamefont {Lai}, \citenamefont {Oh},\ and\ \citenamefont
  {Ekert}}]{du_experimental_2005}%
  \BibitemOpen
  \bibfield  {author} {\bibinfo {author} {\bibfnamefont {J.}~\bibnamefont
  {Du}}, \bibinfo {author} {\bibfnamefont {T.}~\bibnamefont {Durt}}, \bibinfo
  {author} {\bibfnamefont {P.}~\bibnamefont {Zou}}, \bibinfo {author}
  {\bibfnamefont {H.}~\bibnamefont {Li}}, \bibinfo {author} {\bibfnamefont
  {L.~C.}\ \bibnamefont {Kwek}}, \bibinfo {author} {\bibfnamefont {C.~H.}\
  \bibnamefont {Lai}}, \bibinfo {author} {\bibfnamefont {C.~H.}\ \bibnamefont
  {Oh}}, \ and\ \bibinfo {author} {\bibfnamefont {A.}~\bibnamefont {Ekert}},\
  }\href@noop {} {\bibfield  {journal} {\bibinfo  {journal} {Phys. Rev. Lett.}\
  }\textbf {\bibinfo {volume} {94}},\ \bibinfo {pages} {040505} (\bibinfo
  {year} {2005})}\BibitemShut {NoStop}%
\bibitem [{\citenamefont {Buzek}\ and\ \citenamefont
  {Hillery}(1996)}]{buzek_quantum_1996}%
  \BibitemOpen
  \bibfield  {author} {\bibinfo {author} {\bibfnamefont {V.}~\bibnamefont
  {Buzek}}\ and\ \bibinfo {author} {\bibfnamefont {M.}~\bibnamefont
  {Hillery}},\ }\href@noop {} {\bibfield  {journal} {\bibinfo  {journal}
  {Physical Review A}\ }\textbf {\bibinfo {volume} {54}},\ \bibinfo {pages}
  {1844} (\bibinfo {year} {1996})}\BibitemShut {NoStop}%
\bibitem [{\citenamefont {Nielsen}\ and\ \citenamefont
  {Chuang}(2000)}]{nielsen_2000_quantum}%
  \BibitemOpen
  \bibfield  {author} {\bibinfo {author} {\bibfnamefont {M.}~\bibnamefont
  {Nielsen}}\ and\ \bibinfo {author} {\bibfnamefont {I.}~\bibnamefont
  {Chuang}},\ }\href@noop {} {\emph {\bibinfo {title} {Quantum Computation and
  Quantum Information}}}\ (\bibinfo  {publisher} {Cambridge University Press},\
  \bibinfo {address} {Cambridge, England},\ \bibinfo {year} {2000})\BibitemShut
  {NoStop}%
\bibitem [{\citenamefont {Deutsch}\ \emph {et~al.}(1995)\citenamefont
  {Deutsch}, \citenamefont {Barenco},\ and\ \citenamefont {Ekert}}]{Deutsch95}%
  \BibitemOpen
  \bibfield  {author} {\bibinfo {author} {\bibfnamefont {D.}~\bibnamefont
  {Deutsch}}, \bibinfo {author} {\bibfnamefont {A.}~\bibnamefont {Barenco}}, \
  and\ \bibinfo {author} {\bibfnamefont {A.}~\bibnamefont {Ekert}},\
  }\href@noop {} {\bibfield  {journal} {\bibinfo  {journal} {Proc. Roy. Soc.
  (London) A}\ }\textbf {\bibinfo {volume} {449}},\ \bibinfo {pages} {669}
  (\bibinfo {year} {1995})}\BibitemShut {NoStop}%
\bibitem [{\citenamefont {DiVincenzo}(1995)}]{DiVincenzo95}%
  \BibitemOpen
  \bibfield  {author} {\bibinfo {author} {\bibfnamefont {D.~P.}\ \bibnamefont
  {DiVincenzo}},\ }\href@noop {} {\bibfield  {journal} {\bibinfo  {journal}
  {Phys. Rev. A}\ }\textbf {\bibinfo {volume} {51}},\ \bibinfo {pages} {1015}
  (\bibinfo {year} {1995})}\BibitemShut {NoStop}%
\bibitem [{\citenamefont {Barenco}\ \emph {et~al.}(1995)\citenamefont
  {Barenco}, \citenamefont {Bennett}, \citenamefont {Cleve}, \citenamefont
  {DiVincenzo}, \citenamefont {Margolus}, \citenamefont {Shor}, \citenamefont
  {Sleator}, \citenamefont {Smolin},\ and\ \citenamefont
  {Weinfurter}}]{Barenco95b}%
  \BibitemOpen
  \bibfield  {author} {\bibinfo {author} {\bibfnamefont {A.}~\bibnamefont
  {Barenco}}, \bibinfo {author} {\bibfnamefont {C.~H.}\ \bibnamefont
  {Bennett}}, \bibinfo {author} {\bibfnamefont {R.}~\bibnamefont {Cleve}},
  \bibinfo {author} {\bibfnamefont {D.~P.}\ \bibnamefont {DiVincenzo}},
  \bibinfo {author} {\bibfnamefont {N.}~\bibnamefont {Margolus}}, \bibinfo
  {author} {\bibfnamefont {P.}~\bibnamefont {Shor}}, \bibinfo {author}
  {\bibfnamefont {T.}~\bibnamefont {Sleator}}, \bibinfo {author} {\bibfnamefont
  {J.~A.}\ \bibnamefont {Smolin}}, \ and\ \bibinfo {author} {\bibfnamefont
  {H.}~\bibnamefont {Weinfurter}},\ }\href@noop {} {\bibfield  {journal}
  {\bibinfo  {journal} {Phys. Rev. A}\ }\textbf {\bibinfo {volume} {52}},\
  \bibinfo {pages} {3457} (\bibinfo {year} {1995})}\BibitemShut {NoStop}%
\bibitem [{\citenamefont {Barenco}(1995)}]{Barenco95a}%
  \BibitemOpen
  \bibfield  {author} {\bibinfo {author} {\bibfnamefont {A.}~\bibnamefont
  {Barenco}},\ }\href@noop {} {\bibfield  {journal} {\bibinfo  {journal} {Proc.
  R. Soc. Lond. A}\ }\textbf {\bibinfo {volume} {449}},\ \bibinfo {pages} {679}
  (\bibinfo {year} {1995})}\BibitemShut {NoStop}%
\bibitem [{\citenamefont {Khaneja}\ \emph {et~al.}(2001)\citenamefont
  {Khaneja}, \citenamefont {Brockett},\ and\ \citenamefont
  {Glaser}}]{khaneja_time_2001}%
  \BibitemOpen
  \bibfield  {author} {\bibinfo {author} {\bibfnamefont {N.}~\bibnamefont
  {Khaneja}}, \bibinfo {author} {\bibfnamefont {R.}~\bibnamefont {Brockett}}, \
  and\ \bibinfo {author} {\bibfnamefont {S.~J.}\ \bibnamefont {Glaser}},\
  }\href@noop {} {\bibfield  {journal} {\bibinfo  {journal} {Phys. Rev. A}\
  }\textbf {\bibinfo {volume} {63}},\ \bibinfo {pages} {032308} (\bibinfo
  {year} {2001})}\BibitemShut {NoStop}%
\bibitem [{\citenamefont {Kraus}\ and\ \citenamefont
  {Cirac}(2001)}]{kraus_optimal_2001}%
  \BibitemOpen
  \bibfield  {author} {\bibinfo {author} {\bibfnamefont {B.}~\bibnamefont
  {Kraus}}\ and\ \bibinfo {author} {\bibfnamefont {J.~I.}\ \bibnamefont
  {Cirac}},\ }\href@noop {} {\bibfield  {journal} {\bibinfo  {journal} {Phys.
  Rev. A}\ }\textbf {\bibinfo {volume} {63}},\ \bibinfo {pages} {062309}
  (\bibinfo {year} {2001})}\BibitemShut {NoStop}%
\bibitem [{\citenamefont {Bullock}\ and\ \citenamefont
  {Markov}(2003)}]{bullock_arbitrary_2003}%
  \BibitemOpen
  \bibfield  {author} {\bibinfo {author} {\bibfnamefont {S.~S.}\ \bibnamefont
  {Bullock}}\ and\ \bibinfo {author} {\bibfnamefont {I.~L.}\ \bibnamefont
  {Markov}},\ }\href@noop {} {\bibfield  {journal} {\bibinfo  {journal} {Phys.
  Rev. A}\ }\textbf {\bibinfo {volume} {68}},\ \bibinfo {pages} {012318}
  (\bibinfo {year} {2003})}\BibitemShut {NoStop}%
\bibitem [{\citenamefont {Vidal}\ and\ \citenamefont
  {Dawson}(2004)}]{vidal_universal_2004}%
  \BibitemOpen
  \bibfield  {author} {\bibinfo {author} {\bibfnamefont {G.}~\bibnamefont
  {Vidal}}\ and\ \bibinfo {author} {\bibfnamefont {C.~M.}\ \bibnamefont
  {Dawson}},\ }\href@noop {} {\bibfield  {journal} {\bibinfo  {journal} {Phys.
  Rev. A}\ }\textbf {\bibinfo {volume} {69}},\ \bibinfo {pages} {010301}
  (\bibinfo {year} {2004})}\BibitemShut {NoStop}%
\bibitem [{\citenamefont {Vatan}\ and\ \citenamefont
  {Williams}(2004)}]{vatan_optimal_2004}%
  \BibitemOpen
  \bibfield  {author} {\bibinfo {author} {\bibfnamefont {F.}~\bibnamefont
  {Vatan}}\ and\ \bibinfo {author} {\bibfnamefont {C.}~\bibnamefont
  {Williams}},\ }\href@noop {} {\bibfield  {journal} {\bibinfo  {journal}
  {Phys. Rev. A}\ }\textbf {\bibinfo {volume} {69}},\ \bibinfo {pages} {032315}
  (\bibinfo {year} {2004})}\BibitemShut {NoStop}%
\bibitem [{\citenamefont {Bagan}\ \emph {et~al.}(2006)\citenamefont {Bagan},
  \citenamefont {Ballester}, \citenamefont {Gill}, \citenamefont {Monras},\
  and\ \citenamefont {Mu{\~n}oz-Tapia}}]{bagan_optimal_2006}%
  \BibitemOpen
  \bibfield  {author} {\bibinfo {author} {\bibfnamefont {E.}~\bibnamefont
  {Bagan}}, \bibinfo {author} {\bibfnamefont {M.~A.}\ \bibnamefont
  {Ballester}}, \bibinfo {author} {\bibfnamefont {R.~D.}\ \bibnamefont {Gill}},
  \bibinfo {author} {\bibfnamefont {A.}~\bibnamefont {Monras}}, \ and\ \bibinfo
  {author} {\bibfnamefont {R.}~\bibnamefont {Mu{\~n}oz-Tapia}},\ }\href@noop {}
  {\bibfield  {journal} {\bibinfo  {journal} {Phys. Rev. A}\ }\textbf {\bibinfo
  {volume} {73}},\ \bibinfo {pages} {032301} (\bibinfo {year}
  {2006})}\BibitemShut {NoStop}%
\bibitem [{\citenamefont {Bennett}\ and\ \citenamefont
  {Brassard}(1984)}]{BB84}%
  \BibitemOpen
  \bibfield  {author} {\bibinfo {author} {\bibfnamefont {C.~H.}\ \bibnamefont
  {Bennett}}\ and\ \bibinfo {author} {\bibfnamefont {G.}~\bibnamefont
  {Brassard}},\ }in\ \href@noop {} {\emph {\bibinfo {booktitle} {Proceedings of
  IEEE International Conference on Computers, Systems, and Signal
  Processing}}}\ (\bibinfo  {publisher} {IEEE},\ \bibinfo {address} {New
  York},\ \bibinfo {year} {1984})\BibitemShut {NoStop}%
\end{thebibliography}%

\end{document}